\newtheorem{Corollary}{Corollary}
\newtheorem{Lemma}{Lemma}
\newtheorem{Theorem}{Theorem}
\newtheorem{Proposition}{Proposition}
\newtheorem{Definition}{Definition}
\newcommand{\myP}[2][R]{P_{#1}(#2)}
\newcommand{\myNormP}[2][R]{\tilde{P}_{#1}(#2)}
\newcommand{\distiid}{\overset{\text{i.i.d}}{\sim}}
\newcommand{\convprob}{\overset{p}{\to}}
\newcommand{\convdist}{\overset{w}{\to}}
\newcommand{\Expect}[1]{\mathbb{E}\left[ #1 \right]}
\newcommand{\Var}[1]{\mathrm{Var}\left( #1 \right)}
\newcommand{\Prob}[1]{\mathbb{P}( #1 )}
\newcommand{\iset}{\mathbf{i}}
\newcommand{\jset}{\mathbf{j}}
\DeclareMathOperator*{\argmin}{argmin} % thanks, wikipedia!
\begin{document}

\title{Bootstrapping Exchangeable Random Graphs}
\author[1]{Alden Green}
\author[1,2,3]{Cosma Rohilla Shalizi}
\affil[1]{Statistics Department, Carnegie Mellon University, 5000 Forbes Avenue, Pittsburgh, PA 15213 USA}
\affil[2]{Machine Learning Department, Carnegie Mellon University, 5000 Forbes Avenue, Pittsburgh, PA 15213 USA}
\affil[3]{Santa Fe Institute, 1399 Hyde Park Road, Santa Fe, NM 87501 USA}

% \thanks{t1}{Supported by a grant from the National Science Foundation (DMS1418124).}
% \thanks{t2}{Supported by grants from the Institute for New Economic Thinking (IN01100005, INO1400020) and from the National Science Foundation (DMS1207759, DMS1418124).}

\maketitle

\begin{abstract}
  We introduce two new bootstraps for exchangeable random graphs.  One, the
  ``empirical graphon bootstrap'', is based purely on resampling, while the other, the
  ``histogram bootstrap'', is a model-based ``sieve'' bootstrap.
  We show that both of them accurately approximate the sampling distributions
  of motif densities, i.e., of the normalized counts of the number of times
  fixed subgraphs appear in the network.  These densities characterize the
  distribution of (infinite) exchangeable networks.  Our bootstraps therefore
  give a valid quantification of uncertainty in inferences
  about fundamental network statistics, and so of parameters identifiable from
  them.
\end{abstract}

% Introduction
\section{Introduction and Goal}

By this point, it is a cliche to say that networks matter, and that network
data analysis is an increasingly important part of statistics. Statistical work
has largely focused on elaborating models and obtaining point-estimates of
their parameters \citep{Olding-Wolfe-inference-for-graphs,
	Kolaczyk-on-network-data}; there has been comparatively little progress in
quantifying uncertainty in these estimates, though that is essential to their
scientific utility.  If we had widely-accepted parametric models, we might hope
to use standard asymptotics, at least heuristically, but we do not have such
models, and we have reason to doubt that standard asymptotics apply to networks
\citep{your-favorite-ergm-sucks}\footnote{Since the standard asymptotics
	essentially rest on the log-likelihood having a quadratic maximum
	\citep{Geyer-on-Le-Cam}, this says something interesting and strange about
	network dependence, but that will have to be pursued elsewhere.}. In other
areas of statistics, bootstrapping has been highly successful at quantifying
uncertainty, even in the face of model mis-specification and complicated
dependence structures \citep{Lahiri-resampling-for-dependent}.  Accordingly, in
this paper, we introduce two bootstraps, one, the ``empirical graphon bootstrap'', based
purely on resampling, the other, the ``histogram bootstrap'',
being a model-based ``sieve'' bootstrap.  We prove that they both accurately
approximate the sampling distributions of ``motif densities'', the normalized
count of the number of times any fixed subgraph (or ``motif'') appears in the
network.  Under exchangeability of the nodes, such densities are known to
characterize the distribution of (infinite) networks, as well as defining the
convergence of sequences of individual (non-random) graphs.  Our bootstraps
therefore provide theoretically sound ways to quantify the
uncertainty in inferences regarding a fundamental class of network statistics,
and so of parameters identifiable from these statistics.

As a contribution to network data analysis, our work extends previous proposals
for quantifying uncertainty by means of subsampling the network and using
plug-in Gaussian approximations \citep{Bhattacharyya-subsample-count-features},
and heuristic parametric bootstraps \citep{Rosvall-Bergstrom-mapping-change}
and resampling schemes \citep{Eldardiry-Neville-network-resampling}.
However, from work on bootstrapping in other areas, we know that estimation of
the distribution via resampling can be more accurate, particularly with small
sample sizes. While our proofs rely on asymptotic arguments via normal
approximations, we think its reasonable to expect that our estimators will
perform well in the small sample setting as well.  From a bootstrap-theory
perspective, our contribution is to extend the validity of resampling and sieve
bootstraps to a new type of dependence structure, joining previous work on time
series, spatial data, and U-statistics.

\paragraph{Probabilistic background and general approach}

Exchangeability of the nodes\footnote{Often called ``joint'' exchangeability,
	to indicate that one applies the same permutation to both the rows and the
	columns of the adjacency matrix of a graph, as opposed to the stricter
	``separate'' exchangeability, where rows and columns can be differently
	permuted \citep[sec.\ 7.1]{Kallenberg-symmetries}.} is a common assumption on
networks; it corresponds to the assumption that any two isomorphic graphs
should be equally probable, and that no information we have on individual nodes
(other than their location in the network) is useful for predicting their
links.  As with other probabilistic symmetries, exchangeability is useful, in
part, because of representation theorems which state that all (infinite)
exchangeable distributions are mixtures of certain extremal distributions with
the same symmetry, but stronger independence properties
\citep{Dynkin-suff-stats-and-extreme-points, Kallenberg-symmetries,
	Lauritzen-extreme-point-models}.  In the case of exchangeable networks, the
relevant extremal distributions, now often called ``graphons'', were
characterized by Aldous and Hoover, and work as follows \citep[ch.\
7]{Kallenberg-symmetries}.  Every node gets an independent, uniformly
distributed random variable on $[0,1]$, say $\epsilon_i$ for node $i$, and
there is a link function $h: [0,1]\times[0,1] \mapsto [0,1]$, symmetric in its
arguments, such that the probability of an edge between $i$ and $j$ is
$h(\epsilon_i, \epsilon_j)$, and all edges are independent (given the
$\epsilon$s).  Any exchangeable distribution is a mixture of such graphon
distributions\footnote{Naturally, the same distribution can be realized by many
	different $h$ functions, which leads to some subtleties in a formal statement
	of the representation theorem.  We do not elaborate on this, since it is not
	relevant to our concerns, but see \citet[sec.\ 7.6]{Kallenberg-symmetries}.},
and any one realization of an exchangeable distribution is drawn from a {\em
	single} $h$.

This provides a natural approach to bootstrapping: estimate the link function
$h$, then randomly redraw node variables and reconnect the edges with the
corresponding probability.  Our task is thus just (!) to estimate the link
function sufficiently well.  We propose two approaches.  One, the ``empirical
graphon'', takes the adjacency matrix, views it as a binary-valued function on
the unit square, and uses that as our estimate of $h$.  Proving the validity
of this bootstrap then relies on results about the convergence, in a suitable
topology, of exchangeable random graphs to their generating graphon.  Our other
bootstrap is a histogram-like estimator of the graphon, a special case of
stochastic block models, essentially approximating $h$ by a series of simple
functions.  Its validity rests on some smoothness assumptions regarding $h$,
but, when they hold, it gives a faster rate of convergence.  In both
approaches, a Berry-Esseen inequality for U-statistics due to
\citet{Callaert-berry-esseen-u-statistics} provides a crucial technical tool.

\paragraph{Organization}

Section \ref{sec:notation} fixes notation, lays out assumptions, and, in \S
\ref{sec:resampling-procedures}, formally proposes the two bootstraps.  Section
\ref{sec:main-results} gives the main theorems, stating conditions under which
our bootstraps consistently approximate the distribution of motif densities. Section~\ref{sec:simulations} shows empirically that our bootstraps perform well even on moderate sized networks.
Section \ref{sec:supporting} collects supporting propositions and lemmas,
and Section \ref{sec:proofs} proves the main results.

% Notation and Methodology
\section{Notation and Methodology}
\label{sec:notation}

We (mostly) follow the notation of
\citet{Bhattacharyya-subsample-count-features}.  Unless otherwise noted, by
``graph'' we will always mean an undirected, simple graph.  For any graph $G$,
$V(G)$ will be the set of its vertices, and $E(G)$ the set of its edges; when
$i \in V(G)$, $j \in V(G)$, we write $(i, j)$ for the (unordered) pair of
nodes, and $(i,j) \in E(G)$ or $(i,j) \notin E(G)$ depending on whether or not
there is an edge.  We will sometimes abbreviate this as $(i, j) \in G$ when
there is no chance of ambiguity.  We will use colons to abbreviate sequences,
so that $i:j$ stands for $i, i+1, \ldots j-1, j$, and (say) $x_{i:j}$ the
sequence of variables $x_{i}, x_{i+1}, \ldots x_{j-1}, x_j$.  Given an ordered
$p$-tuple of indices in $1:n$, $\iset = (i_i, i_2, \ldots i_p)$, we let
$G(\iset) = (\iset, E(G) \cap \{\iset \times \iset\})$ be the induced subgraph of $G$ with those vertices; we will write
$S_n(p)$ for the collection of all ordered $p$-tuples of $1:n$.  Two graphs
$G_1$ and $G_2$ are isomorphic when their nodes can be put in one-to-one
correspondence while preserving both edges and non-edges, i.e., there is an
invertible mapping $\sigma: V(G_1) \mapsto V(G_2)$ such that $(i,j) \in E(G_1)$
if and only if $(\sigma(i), \sigma(j)) \in E(G_2)$.  When this holds, we write
$G_1 \simeq G_2$, and we write $N(G)$ for the number of graphs on $1:|V(G)|$
which are isomorphic to $G$.  $K_p$ will indicate the complete graph on $1:p$,
i.e., the $p$-node graph with all possible edges.

Our data $G_n$ is a graph on the vertices $1:n$, with corresponding $n\times n$
adjacency matrix $A$.  We assume that the graph is exchangeable, and
hence was generated as follows:
\begin{eqnarray}
\label{eqn:graphon_sample}
\epsilon_i & \distiid & \mathrm{Uniform}(0,1)\\
A_{ij} | \epsilon_{1:n} & \overset{\mathrm{ind}}{\sim} & \mathrm{Bernoulli}(h_n(\epsilon_i, \epsilon_j))
\end{eqnarray}
for a symmetric, measurable link function $h_n: [0,1]\times[0,1] \mapsto [0,1]$.
Without loss of generality, we decompose the function $h_n$ as
\begin{equation}
h_{n}(u,v) = \rho_{n} w(u,v)
\end{equation}
where $\int_{0}^{1}{\int_{0}^{1}{w(u,v) du dv}} = 1$, so that $\rho_n$ is the
marginal probability of an edge between any two nodes, i.e., the (expected)
edge density\footnote{The job of the $\rho_n$ factor is
	to allow the graph to become sparse as $n$ grows, as in
	\citet{Borgs-Chayes-Zhao-sparse-graph-convergence}; otherwise, graph
	sequences generated by graphons are ``dense'', i.e., the number of edges
	grows quadratically with the number of nodes.  If this is not a concern, and
	this is a point of some debate in the field, one can fix $\rho_n$ to a
	constant value for all $n$.  All of our results are valid under such
	dense-graph limits, and indeed most of them would simplify.}. We will frequently make assume that $w$ is bounded in various $L^p([0,1]^2)$ norms, and we use $\|w\|_{p} = (\int_{[0,1]^2} w(u,v)^p \,du\,dv )^{1/p}$ to refer to such norms.

Fixing any $p$-node connected, simple, undirected graph $R$ that we like, we
can ask about the probability that the first $p$ nodes of $G_n$ instantiate
this motif\footnote{The literature typically calls both $R$ and $G_n(1:p)$
	``subgraphs''; to avoid confusion, we borrow the term ``motif'' from
	\citet{Milo-et-al-motifs} to designate the pattern being matched, though
	those authors suggested using it for the patterns which were, in some sense,
	more common than expected by chance.},
\begin{equation}
\myP{h} = \Prob{G_n(1:p) = R}
\end{equation}
Of course, by exchangeability, $\myP{h} = \Prob{G_n(\iset) = R}$ for any
$\iset \in S_n(p)$.  These probabilities are thus very much like moments of the
distribution of $G_n$, and indeed it is known from previous work
\citep{Lovasz-very-large-graphs} that the collection of these probabilities,
over all motifs $R$, suffice to characterize an exchangeable graph
distribution\footnote{See \citet{Bickel-Chen-Levina-method-of-moments} for a
	discussion, and a method-of-moments procedure for estimating $h$, based on
	this fact.}.  One can show \citep{Lovasz-very-large-graphs} that
\begin{equation}
\myP{h} = \Expect{ \prod_{(i,j) \in E(R)} h_n(\epsilon_i,\epsilon_j) \prod_{(i,j) \notin E(R)} (1-h_n(\epsilon_i,\epsilon_j)) }
\end{equation}

It is natural to want to relate these moments to their sample counter-parts.
It turns out that a good estimate for $\myP{h}$ is simply to count the number
of induced subgraphs in $G_n$ which are isomorphic to
$R$:
\begin{equation}
\myP{G_n} =  \frac{1}{{n \choose p} p! N(R)} \sum_{\iset \in S_n(p)}{\mathbb{I}(G_n(\iset)\simeq R)}
\end{equation}
Unsurprisingly, $\Expect{\myP{G_n}} = \myP{h}$.  Moreover, previous
work\footnote{See, for instance, \citet[Lemma
	4.4]{Borgs-Chayes-Lovasz-et-al-convergent-graph-sequences-i}, which gives an
	explicit (though potentially loose) rate of convergence.  This rate is fast
	enough that a Borel-Cantelli argument could strengthen convergence in
	probability to almost-sure convergence, but this goes beyond what we need
	here.} on graph limits tells us that, for fixed $R$,
$\myP{G_n} \convprob \myP{h}$.

Finally, we will need a few scaled versions of the above quantities, since we
allow the sparsity factor $\rho_n$ to approach $0$ as $n$ grows to $\infty$.  First, let
$\hat{\rho}_n = \myP[K_2]{G_n}$ be the edge density observed in the graph.
Second, let $\myNormP{h} = \frac{\myP{h}}{\rho_{n}^{|E(R)|}}$, and its
corresponding empirical quantity
$\myNormP{G_n} = \frac{\myP{G_n}}{\hat{\rho}_{n}^{|E(R)|}}$.

\paragraph{Miscellaneous notation and conventions} Unless otherwise noted, all
limits are taken as the number of nodes in the graph grows, i.e., as
$n\rightarrow \infty$.

\subsection{Resampling Procedures}
\label{sec:resampling-procedures}

Our resampling procedures begin with an estimate of the graphon, $\hat{h}: [0,1]^2 \rightarrow [0,1]$,
a mapping to be estimated using the graph $G_n$. Given such an estimate $\hat{h}$, we then generate $m$ random
variables $\epsilon_i^* \distiid \mathrm{Uniform}(0,1)$; here $m \in \mathbb{N}$ will be the number of nodes in our resampled network. 
We then simulate from $\hat{h}$ in the way we generate from graphons, forming the bootstrapped
network $G^*_{m}$. More precisely, we let the bootstrapped adjacency matrix
$A^* = \left(A_{ij}^* \right)$ where the entries $A_{ij}^{*}$ are conditionally independent given $G_n$ and $\epsilon_{1:m}^*$, and follow the distribution
$A_{ij}^*|(G_n,\epsilon_{1:m}^*) \sim
\text{Bern}(\hat{h}(\epsilon_i^*,\epsilon_j^*))$. (Implicitly, $m = m(n)$). The
properties of our bootstrapping procedure clearly depend on the graphon estimate $\hat{h}$, and we now formally define the two estimators we will subsequently analyze.

Our first approach estimates the graphon by using its empirical counterpart: the adjacency matrix.

\begin{Definition}[Empirical Graphon]
	The \textbf{empirical graphon}, denoted $\hat{h}_{adj}$, is
	$\hat{h}_{adj}(u,v) := A_{\lceil nu \rceil \lceil nv \rceil}$. We refer to process of resampling using the empirical graphon as the \emph{empirical graphon bootstrap}. 
\end{Definition}

The empirical graphon bootstrap is equivalent to sampling $m$ vertices from $G_n$ (with
replacement), and adding in adjacencies exactly as they appear in $G_n$. Despite the intuitive analogy between this scheme and the classical i.i.d bootstrap --- here, we treat the vertices as the units of data to be resampled --- and the widespread success of the bootstrap in the i.i.d case, vertex resampling procedures have not heretofore been rigorously analyzed\footnote{\citet{Owen-Eckles-bootstrapping-data-arrays}, drawing on \citet{Mccullagh-resampling-and-exchangeable-arrays}, consider a similar bootstrap for estimating the variance in the mean of a multi-dimensional array of real-valued random variables.}. One potential explanation for this is that when the number of resampled vertices $m$ is sufficiently large, with high probability the resampled graph $G_m^{\ast}$ will contain multiple copies of the same vertex in $G_n$. Since $G_n$ is a simple graph containing no loops, these copies will never be connected in $G_m^{\ast}$. This (non)-adjacency structure between copies of the same vertex does not reflect any underlying feature of the true generative process by which $G_n$ was formed, and therefore induces a bias between the conditional distribution of $G_m^{\ast}$ and the distribution of $G_n$, a fact remarked upon by \citep{Eldardiry-Neville-network-resampling, Levin-bootstrapping-latent-space-networks}. However, as we will see, under appropriate conditions this bias is asymptotically negligible, at least with regards to estimating the distribution of motif densities.

That being said, these conditions, such as on the maximum amount of sparsity tolerated, may be unrealistic depending on the particular problem of interest. Under stronger assumptions on the link function $w$, it is possible to more accurately estimate the graphon $h_n$, with resulting improvements to the downstream bootstrap procedure. This motivates our second estimator, which is exactly the restricted least squares histogram estimator set forth in \citet{Klopp-oracle-inequalities-for-graphons}.

\begin{Definition}[Histogram]
	\label{histogram graphon estimate}
	Fix an integer $r > 1$ which corresponds to the number of bins in the
	histogram, and a number $s \in (0,1]$ which corresponds to the maximum value the histogram estimate can take.  Define the set $\mathcal{Z}_{n,r}$ to consist of all functions which
	assign each of the $n$ nodes to one of the $r$ classes. Then we set
	the histogram estimate $\hat{h}_{\mathrm{hist}}$ of $h$ to be the least-squares estimate over functions which are piecewise-constant on partitions over the unit square, and which are bounded above by $s$. That is, for $Q = (Q_{ab}) \in R^{r \times r}, \|Q\|_{\infty} \leq s$ and $z \in \mathcal{Z}_{n,r}$, we set
	\begin{align}
	L(Q,z) & = \sum_{a,b \in [r] } \sum_{(i,j) \in z^{-1}(a) \times z^{-1}(b)} (A_{ij} - Q_{ab})^2 \\
	(\hat{Q},\hat{z}) & = \argmin_{Q, z}{L(Q,z)} \label{eqn:histogram_1}\\
	\hat{\theta}_{ij} & = \hat{Q}_{\hat{z}(i)\hat{z}(j)} \\
	\hat{h}_{hist}(u,v) & = \theta_{\lceil nu \rceil \lceil nv \rceil}
	\end{align}
	We refer to process of resampling using the histogram estimate $\hat{h}_{\mathrm{hist}}$ as the \emph{histogram bootstrap}.
\end{Definition}

The histogram estimator of the graphon is a specific case of the stochastic
block model, which itself dates back at least to
\cite{Fienberg-Wasserman-sociometric-relations,
	Holland-Lasky-Leinhardt-stochastic-blockmodels,
	Fienberg-Meyer-Wasserman-sociometric-relations}.  In such models, every node
is independently and randomly assigned to one of $r$ latent classes or
``blocks'', and edges form independently between nodes, with probabilities
depending only on the nodes' block assignments.  The histogram estimator used
here was introduced by
\citet{Klopp-oracle-inequalities-for-graphons}, though see also
\cite{Wolfe-Olhede-nonparametric-graphon-estimation, Gao-graphon-estimation, Gao-optimal-completion}. \citet{Klopp-oracle-inequalities-for-graphons} derive upper bounds on the mean-squared error of this particular histogram estimator which hold for all $Cn^{-1} \leq \rho_n \leq 1$. In our analysis of the histogram bootstrap, we will make use of these estimates on mean-squared error; interestingly, we will also make use of the fact that the histogram estimate $\hat{h}$ is itself bounded, by construction.

\paragraph{Related Work}
We have already mentioned a few suggested schemes for quantifying uncertainty of network statistics. Of these, the closest to our own approach is that of~\citet{Bhattacharyya-subsample-count-features}, who consider a pair of subsampling schemes which they show lead to consistent distributional estimates for the same types of statistics (motif densities) that we analyze. The empirical graphon bootstrap can be viewed as analogous to their uniform subsampling procedure, but with nodes sampled with replacement rather than without replacement; as mentioned previously, it is known in the i.i.d setting that resampling can lead to much more accurate estimates at small sample sizes than subsampling (see e.g.~\cite{Bickel-resampling-bootstrap}). Our theoretical results also cover general motifs, as opposed to~\citet{Bhattacharyya-subsample-count-features} who study only acyclic motifs and rings.

There also exists some other related work, which appeared after an initial version of this manuscript was made available as a preprint. \citet{Levin-Levina-bootstrapping-latent-space-networks} study a model-based bootstrap under the assumption that the observed network is a random dot product graph. In a different direction, \citet{Lunde-Sarkar-subsampling-sparse-graphons} establish the consistency of a subsampling approach for more general classes of network statistics. Finally, \citet{Zhang-edgeworth-expansion-network-moments} consider a studentized version of our empirical graphon bootstrap, and use the Edgeworth expansion to derive rates of convergence and prove higher-order correctness.

% Main Results
\section{Main Results}
\label{sec:main-results}

Our main pair of results establish that if one samples $G_{m}^{\ast}$ using either the empirical graphon or histogram bootstraps, then the conditional distribution
(after the right scaling and centering) of $\myP{G_m^*}$ converges in
probability to the distribution of $\myP{G_n}$, under some assumptions about
the sparsity of the graphon, the structure of the motif $R$, and --- in the case
of estimation using a histogram --- the smoothness of the graphon. For notational convenience, set $\bar{\rho}_n := P_{K_2}(\hat{h})$ to be the expected edge density of the resampled graph $G_m^{\ast}$.

\begin{Theorem} \label{empirical graphon bootstrap} Let $G_m^*$ be sampled
	from the empirical graphon bootstrap. Suppose that $w \neq 1$ on a set of 
	strictly positive (Lebesgue) measure in $[0,1]^2$. For any $p$-node motif $R$, if (i) $\int_{[0,1]^2}{ w^{4|E(R)|}(u_1,u_2) du_{1:2}} < \infty$, (ii) either $R$ is acyclic and $\rho_n = \omega(n^{-1})$ or $R$ is general and $\rho_n = \omega(n^{-\frac{1}{2p}})$,
	and (iii) $m \to \infty$ and $m = \omega(\rho_n^{-4|E(R)|})$, then
	\begin{equation} \label{empirical graphon bootstrap formula}
	\begin{aligned}
	\sup_{x}~
		& \Biggl|\mathbb{P} \left(\frac{\sqrt{m}}{\bar{\rho}_n^{|E(R)|}}
		\left(\myP{G_m^*} - \myP{\hat{h}}\right) \le x \middle| G_n \right) \\
		& - \mathbb{P} \left(\frac{\sqrt{n}}{\hat{\rho}_n^{|E(R)|}} \left(\myP{G_n} -
		\myP{h}\right) \le x \right) \Biggr| \convprob 0.
	\end{aligned}
	\end{equation}
\end{Theorem}
In practice one would approximate the conditional distribution of
$P_R(G_m^{*})$ given $G_n$ through Monte Carlo: that is, by repeatedly drawing samples of
$G_m^{\ast}$ according to the resampling procedure outlined in
Section \ref{sec:resampling-procedures}.  Theorem \ref{empirical graphon
	bootstrap} establishes that, asymptotically in $n$ and the number of
resamples $B$, such a procedure accurately approximates the distribution of
$P_R(G_n)$.

We discuss the various conditions on $w$, $\rho_n$ and $m$ in detail in Sections~\ref{sec:supporting} and~\ref{sec:proofs}, which is also where the proof of Theorem~\ref{empirical graphon bootstrap} and all our other theorems can be found. For now, we make only a few basic observations. Theorem~\ref{empirical graphon bootstrap} shows that the empirical graphon bootstrap is consistent under the relatively modest condition $\|w\|_{4|E(R)|} < \infty$. On the other hand, a more severe limitation of the empirical graphon bootstrap lies in the assumption for general motifs $R$ that $\rho_n = \Omega(n^{-1/2p})$. In contrast, the scaled and centered motif density $\rho_n^{-|E(R)|} \sqrt{n}\bigl(P_R(G_n) - P_R(h)\bigr)$ is known to converge to a non-degenerate Gaussian limit as long as $\rho_n = \Omega(n^{-2/p})$, which allows for much sparser sequences of graphs.

To obtain a consistent distributional estimate for these sparser sequences, we turn to a different graphon estimate. Under appropriate smoothness assumptions on $w$, estimators such
as $\hat{h}_{\mathrm{hist}}$ converge in a strong sense (e.g. in an $L^p([0,1]^2)$ norm) to the graphon function $h_n$; by contrast,
$\hat{h}_{\mathrm{adj}}$ converges to $h_n$ only in a weaker topology. By leveraging this stronger notion of convergence, we can prove that the
subsequent bootstrap procedure is consistent under weaker minimal conditions on the sparsity $\rho_n$.

Before focusing on the histogram estimator $\hat{h}_{hist}$, we start by considering an arbitrary graphon estimate $\hat{h}$: this could be be formed by binning, smoothing, low-rank reconstruction, etc. In Theorem~\ref{thm:convergence_distribution_estimated_graphon}, we show that so long as $\hat{h}$ satisfies a pair of general conditions, the distribution of the resampled graph $G_n^{\ast}$ will converge to that of the original graph $G_n$. 
\begin{Theorem}
	\label{thm:convergence_distribution_estimated_graphon}
	Let $\hat{h}$ be a graphon estimate, and let $G_n^{\ast}$ be a graph resampled from $\hat{h}$ as described in Section~\ref{sec:resampling-procedures}. Suppose $w \neq 1$ on a set of positive Lebesgue measure on $[0,1]^2$. For a $p$-node motif $R$, if (i) $\int_{[0,1]^2} w^{2|E(R)|}(u,v) \,du \,dv < \infty$, (ii) either $R$ is acyclic and $\rho_n = \omega(n^{-1})$ or $R$ is general and $\rho_n = \omega(n^{-2/p})$, and (iii) the graphon estimate $\hat{h}$ satisfies
	\begin{equation}
	\label{asmp:convergence_estimation_1}
	\|\hat{h}\|_{3|E(R)|} = O_p(\rho_n),\quad\textrm{and}\quad \|\hat{h} - h_n\|_{2|E(R)|} = o_p(\rho_n),
	\end{equation}
	then
	\begin{equation}
	\label{eqn:convergence_distribution_estimated_graphon}
	\begin{aligned}
	\sup_{x}~ 
		& \Biggl|\mathbb{P} \left(\frac{\sqrt{n}}{\bar{\rho}_n^{|E(R)|}} \left(\myP{G_n^*} - \myP{\hat{h}}\right) \le x \middle| G_n \right) \\ 
		& - \mathbb{P} \left(\frac{\sqrt{n}}{\hat{\rho}_n^{|E(R)|}} \left(\myP{G_n} - \myP{h}\right) \le x \right) \Biggr| \convprob 0.
	\end{aligned}
	\end{equation}
\end{Theorem}
Of course, for Theorem~\ref{thm:convergence_distribution_estimated_graphon} to be practically useful, we need an estimator $\hat{h}$ which actually achieves the notions of boundedness and convergence that are assumed in~\eqref{asmp:convergence_estimation_1}. Various difficulties arise in finding such an estimator:
\begin{itemize}
	\item {\bf Smoothness}. Although in Theorem~\ref{thm:convergence_distribution_estimated_graphon}, the only explicit assumption made on the function $w$ is that $w \in L^{2|E(R)|}([0,1]^2)$, this is insufficient to guarantee the convergence of $\hat{h} - h_n$ in the sense of~\eqref{asmp:convergence_estimation_1}. In order to obtain such a result, some additional structure must be placed on $w$. Typical assumptions are that $w$ is piecewise constant (which corresponds to the stochastic block model) or that $w$ is $\alpha$-H\"{o}lder for some $\alpha \in (0,\infty]$. For concreteness, we will stick with the assumption that $w$ is $L$-Lipschitz,
	meaning
	\begin{equation*}
	|w(u,v) - w(x,y)| \leq L\bigl(|u - x| + |v - y|\bigr),~~\textrm{for all $(u,v)$ and $(x,y) \in [0,1]^2$.}
	\end{equation*}
	\item {\bf Sparsity}. The sparsity parameter $\rho_n$ plays various roles in the hardness of graphon estimation. Intuitively, as the graph becomes sparser the whole function $h_n$ gets closer to $0$, and is thus easier to estimate; on the other hand, the available data suffers from a worse signal-to-noise ratio. Moreover, we note that as $\rho_n$ decreases the conditions in~\eqref{asmp:convergence_estimation_1} become stronger; thus it is important to have an estimator which has a fast rate of convergence in the very-sparse regime, where say $\rho_n = n^{-1} \log \log n$.
	\item {\bf Correct norm}. Assuming appropriate conditions on smoothness and sparsity, various works have considered the problem of graphon estimation. Typically these works study either the squared-error loss $\|\hat{h} - h\|_2^2$ or an in-sample analogue. However, for any motif $R$ except for the edge $R = K_2$, the norm $\|\cdot\|_{2|E(R)|}$ is a stricter norm than the $\|\cdot\|_{2}$, in the sense that $\|\cdot\|_{2} \leq \|\cdot\|_{2|E(R)|}$. So it is not the case that every graphon estimate $\hat{h}$ which accurately approximates $h$ in $L^2([0,1]^2)$ norm necessarily yields a consistent bootstrapping procedure. 
\end{itemize}
We consider the particular restricted histogram estimator of~\cite{Klopp-oracle-inequalities-for-graphons} precisely because, assuming the link function $w$ is Lipschitz, the estimator converges in a sufficiently strong norm at a sufficiently fast rate.
\begin{Proposition}[Corollary~3.6 of \cite{Klopp-oracle-inequalities-for-graphons}.]
	\label{prop:klopp}
	Suppose $w$ is $L$-Lipschitz, $\rho_n = \omega(n^{-1})$, and the histogram estimate $\hat{h}_{\mathrm{hist}}$ is computed with $r = \sqrt{n \rho_n}$ and $s = \rho_n$. Then
	\begin{equation}
	\label{eqn:klopp}
	\|\hat{h}_{\mathrm{hist}}\|_{\infty} \leq \rho_n, \quad\textrm{and}\quad \|\hat{h}_{\mathrm{hist}} - h_n\|_{2} = O_p\Bigl(\sqrt{\rho_nn^{-1}\log(n\rho_n)}\Bigr) = o_p(\rho_n).
	\end{equation}
\end{Proposition}
Using H\"{o}lder's inequality, we see that the guarantees in~\eqref{eqn:klopp} imply the conditions in~\eqref{asmp:convergence_estimation_1}. Therefore we may apply Theorem~\ref{thm:convergence_distribution_estimated_graphon}, and conclude that the histogram bootstrap yields a consistent distributional estimate.
\begin{Corollary} 
	\label{cor:convergence_distribution_histogram}
	Let $G_n^{\ast}$ be sampled from the histogram bootstrap, where the histogram estimate $\hat{h}_{\mathrm{hist}}$ is computed with $r = \sqrt{n \rho_n}$ and $s = \rho_n$. Suppose $w$ is $L$-Lipschitz, and $w \neq 1$ on a set of positive Lebesgue measure on $[0,1]^2$. For a $p$-node motif $R$, if either $R$ is acyclic and $\rho_n = \omega(n^{-1})$ or $R$ is general and $\rho_n = \omega(n^{-2/p})$, then
	\begin{equation}
	\label{eqn:convergence_distribution_histogram}
	\begin{aligned}
	\sup_{x}~
	& \Biggl|\mathbb{P} \left(\frac{\sqrt{n}}{\bar{\rho}_n^{|E(R)|}} \left(\myP{G_n^*} - \myP{\hat{h}}\right) \le x \middle|G_n \right) \\
	& - \mathbb{P} \left(\frac{\sqrt{n}}{\hat{\rho}_n^{|E(R)|}} \left(\myP{G_n} - \myP{h}\right) \le x \right) \Biggr| \convprob 0.
	\end{aligned}
	\end{equation}
\end{Corollary}
Corollary~\ref{cor:convergence_distribution_histogram} shows that when the link function $w$ is Lipschitz, the histogram bootstrap is consistent for all ranges of sparsity in which the limiting distribution of $\rho_n^{-|E(R)|} \sqrt{n}\bigl(P_R(G_n) - P_R(h)\bigr)$ is known (see~\cite{Bickel-Chen-Levina-method-of-moments}). As promised, when $R$ is a general motif, the sparsity assumptions needed for Theorem~\ref{thm:convergence_distribution_estimated_graphon} and Corollary~\ref{cor:convergence_distribution_histogram} to hold can be much weaker than those required for Theorem~\ref{empirical graphon bootstrap}. For example, suppose the motif of interest $R$ is a triangle,
i.e. $R = \bigl\{\{1,2,3\},\{(1,2),(2,3),(1,3)\}\bigr\}$ is the complete graph on $3$ nodes. In this case, for the guarantees
of Theorem \ref{empirical graphon bootstrap} to hold the sparsity must satisfy $\rho_n = \omega(n^{-1/6})$. On the other hand, assuming $w$ is Lipschitz
the conclusions of Corollary~\ref{cor:convergence_distribution_histogram}
hold whenever the sparsity is at least $\omega(n^{-2/3})$, which is much smaller.

There are certainly drawbacks to the histogram bootstrap. First of all, as already mentioned, to obtain stronger results in terms of the sparsity parameter $\rho_n$, we are forced to make stronger assumptions on the link function $w$. Additionally, we note that the estimator $\hat{h}_{hist}$ explicitly enforces sparsity in the estimate through the tuning parameter $s$, which in Theorem~\ref{thm:convergence_distribution_estimated_graphon} is set based on the (typically unknown) sparsity $\rho_n$. \cite{Klopp-oracle-inequalities-for-graphons} suggest a data-dependent way for choosing $s$, which will not affect the results of~\eqref{eqn:klopp} nor the consistency of the resulting bootstrap procedure. Finally, although the estimator of~\cite{Klopp-oracle-inequalities-for-graphons} has strong theoretical guarantees, it involves solving a combinatorial optimization problem and is not computationally feasible. We now turn to a discussion of this and other computational issues.

\paragraph{Computational considerations}
Each of our proposed bootstraps are computationally intensive procedures. Naively
computing $P_R(G_m^{\ast})$ even once requires checking $O(m^p)$ separate
subgraphs, which may be infeasible except when the sample size $n$ is modest, and the number of vertices in the motif $R$ is small.  Repeating this computation for each bootstrap sample only exacerbates these issues. The histogram bootstrap poses the added challenge that one need compute the estimate $\hat{h}_{hist}$, which as already mentioned poses a serious computational challenge.

Still, the situation is not as bleak as these considerations
suggest, for several reasons.  First, for certain graphs, such as stars and
triangles, there exist fast (indeed, sublinear time) algorithms for
approximating the subgraph count
\citep{Eden-triangle-counting,Gonen-star-counting}.  Second, the motifs whose
densities are of {\em scientific} interest in applied problems tend to be
small. Third, any graphon estimate $\hat{h}$ which satisfies~\eqref{eqn:klopp} will meet the condition~\eqref{asmp:convergence_estimation_1}, and will yield a consistent bootstrap. For instance, the spectral method studied by~\cite{Xu-spectral-methods-graphon-estimation} is a computationally reasonable alternative to the histogram estimate which---after appropriate truncation of the estimate $\hat{h}$ to satisfy $\|\hat{h}\|_{\infty} \leq \rho_n$, and extension of the estimate to $[0,1]^2$---should satisfy~\eqref{eqn:klopp}.

\paragraph{Practical implications}
To summarize, the theory developed in this section shows that our proposed empirical graphon and histogram bootstraps result in consistent estimates for the distributions of motif densities. Motif densities are statistics of direct scientific interest, for instance in biological~\citep{Milo-et-al-motifs} and social~\citep{chains-of-affection} networks. Additionally, our conclusions apply to sparse graphons, which are one way to model the sparsity often seen in real-world networks. 

Finally, the fact that our bootstraps are valid for motif
densities suggests (via the delta method) that they should also be valid for
functionals which can be expressed as well-behaved functions of motif
densities. Precisely because motif densities characterize infinite-exchangeable network distributions, this class of functionals should be quite rich, and include more complicated and scientifically interesting functionals than just motif densities. We
thus view our results as evidence that the bootstrap is a statistically reasonable off-the-shelf method of generally quantifying uncertainty in network data analysis.

% Simulations
\section{Simulations}
\label{sec:simulations}

Our theory kicks in only as the number of vertices $n \to \infty$. Of course we would like to be confident that our bootstraps work when $n$ is finite, and even when $n$ is
reasonably small. In this section, we provide empirical evidence supporting
this conclusion. We consider various
graphons $h_n$ and motifs $R$, and show that confidence intervals formed using either of our proposed bootstraps typically have close to nominal
coverage when $n$ is even moderately large. Moreover, the interval widths shrink quickly
and at comparable rates for each bootstrap procedure.

\paragraph{Setup}
We will consider three separate simulation setups. In each, the graph $G_n$
will be sampled as in~\eqref{eqn:graphon_sample}, and the
setups will be distinguished only by different choices of the link function
$w$. Within each setup, we will vary $n = 25,50,\ldots,400$ and $\rho_n = .02,.1$ and $.25$, to investigate the effects of both sample size and sparsity. We
will always draw $100$ resampled graphs
$(G_n^{\ast})^{(1)},\ldots,(G_n^{\ast})^{(100)}$ with resulting motif densities
$P_R((G_n^{\ast})^{(1)}), \ldots, P_R((G_n^{\ast})^{(100)})$, and take the
empirical distribution
\begin{equation*}
\frac{1}{100}\sum_{b = 1}^{100} \frac{\sqrt{n}}{\bar{\rho}_n^{|E(R)|}}\Bigl(P_R\bigl((G_n^{\ast})^{(b)}\bigr) - P_R(\hat{h})\Bigr)
\end{equation*}
as a Monte Carlo estimate of the (centered and scaled) conditional distribution
of $P_R(G_n^{\ast})$. Letting $1 - \alpha$ be the nominal coverage, we take
$\widehat{q}_{\alpha/2}$ and $\widehat{q}_{1 - \alpha/2}$ to be the $\alpha/2$
and $1 - \alpha/2$ quantiles of this Monte Carlo estimate. Our final confidence
interval is then given by
\begin{equation*}
\widehat{I}_R(G_n) = \biggl(P_R(G_n) + \widehat{q}_{\alpha/2}\frac{\widehat{\rho}_n^{|E(R)|}}{\sqrt{n}}, P_R(G_n) + \widehat{q}_{1 - \alpha/2}\frac{\widehat{\rho}_n^{|E(R)|}}{\sqrt{n}}\biggr)
\end{equation*}
Our theory establishes that, under appropriate conditions on the link function $w$, the true coverage of $\widehat{I}_R(G_n)$ should approach $1 - \alpha$ as $n \to \infty$
(up to error induced by the Monte Carlo approximation). To examine finite
sample behavior, we repeat the above setup over $N = 1000$ draws of the graph
$G_n$ --- and resulting intervals $\widehat{I}_R(G_n)$ --- to obtain an estimated
coverage and average interval width for each value of $n$ and $\rho_n$, and for
each of our two bootstraps. It is these quantities which we examine below.

For the empirical graphon bootstrap, we implemented our own code.  To compute
a histogram estimator, we use the \texttt{blockmodels} package
\citet{Leger-blockmodels-package}, which approximates the least squares
estimator using a variational EM procedure, and chooses the optimal number of
blocks by cross-validation.  Note that this deviates slightly from our
theoretical definition of the histogram estimator, but the deviation is in the
direction of being more computationally tractable, i.e., not having to solve a
difficult combinatorial optimization problem.

\paragraph{Simulation 1: Gaussian Link Function}
\label{subsec:gaussian_link_simulation}

In our first simulation, we let 
\begin{equation*}
w(u,v) \propto \exp(-25 \|u - v\|^2/2)
\end{equation*}
be a Gaussian link function with bandwidth $1/5$ (the constant of
proportionality is chosen so that $\int w(u,v) \,du \,dv = 1$). In Figure
\ref{fig:gaussian_coverage}, we see coverage plotted against sample size, for
the motifs $R = \{(1,2),(2,3), (1,3)\}$ (triangle), $R = \{(1,2),(2,3)\}$
(two-star), $R = \{(1,2),(2,3),(3,4),(4,1)\}$ (four-cycle), and $R =
\{(1,2),(1,3),(1,4)\}$ (claw). The lines are colored according to the
resampling procedure used, and marked according to whether $\rho_n = .02, .1$ or
$.25$.

\begin{figure}[!htb]
	\centering
	\includegraphics[scale = .375]{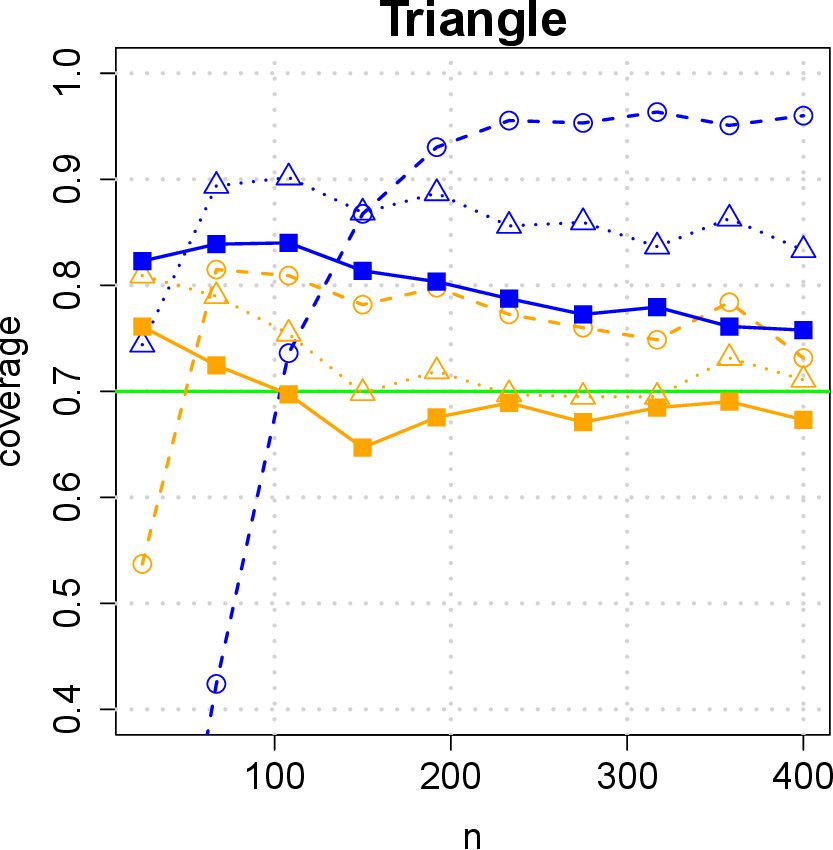}
	\includegraphics[scale = .375]{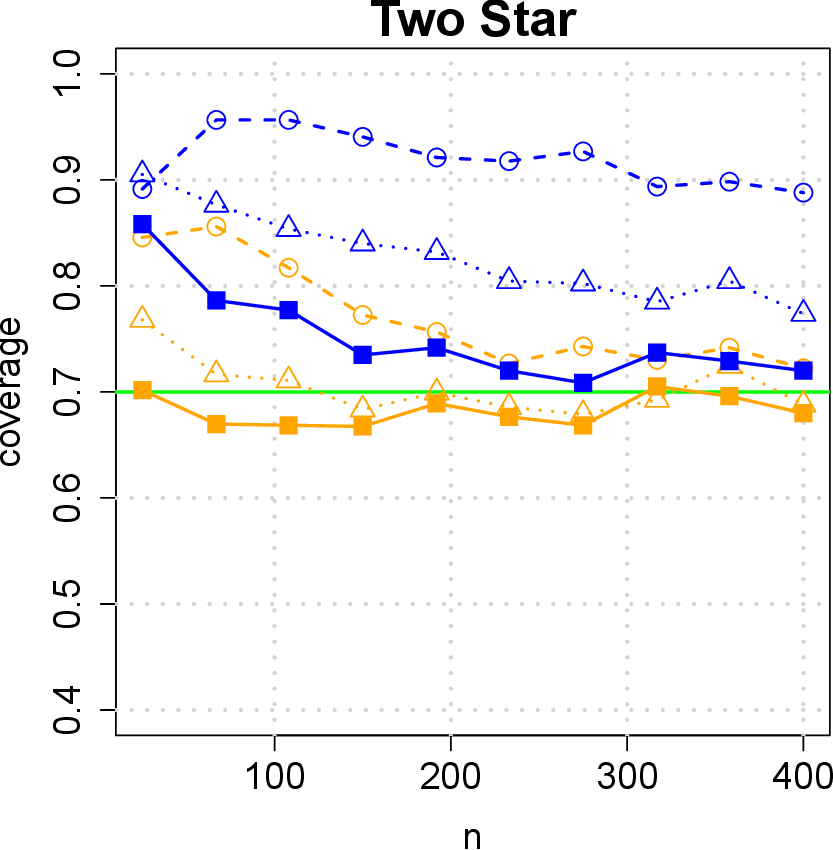}
	\includegraphics[scale = .375]{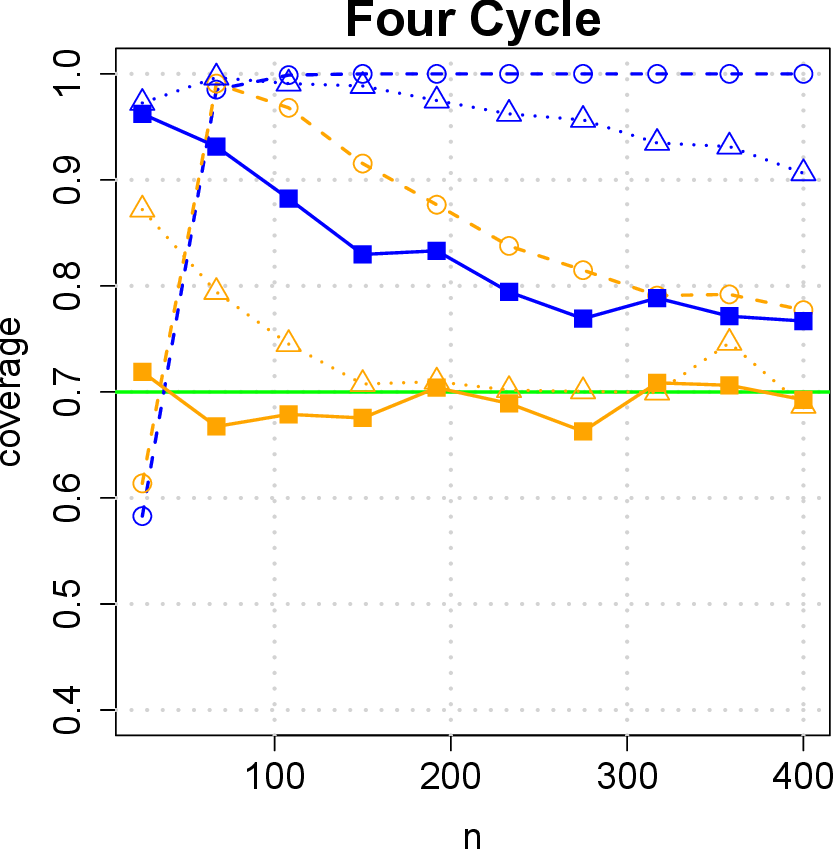}
	\includegraphics[scale = .375]{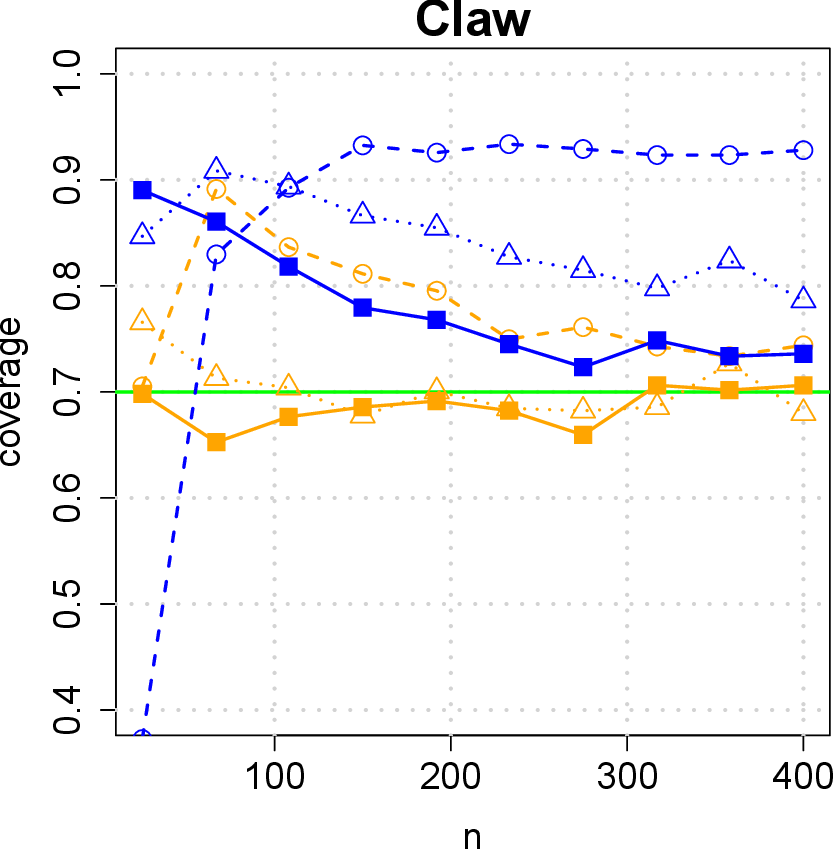}
	\caption{Coverage of bootstrapping methods as a function of sample size for Gaussian link function. Nominal level $1 - \alpha = .7$ denoted by horizontal green line. The empirical graphon procedure is shown in blue, and the histogram graphon procedure in orange. Dashed lines and circular symbols correspond to $\rho_n =.02$; dotted lines and triangles to $\rho_n = .1$; filled lines and squares to $\rho_n = .25$.}
	\label{fig:gaussian_coverage}
\end{figure}

When $\rho_n = .25$, for all of the aforementioned motifs except the four-cycle and for all $n > 300$, the estimated coverage is within $.05$ of nominal coverage for both bootstraps; it is therefore fair to say that both of our bootstraps ``work'' reasonably well in this setting. When $\rho_n = .1$, the story is more mixed. Resampling using the histogram estimate still gives approximately nominal coverage, but estimated coverage of the empirical bootstrap deviates substantially from nominal coverage. That being, even for the empirical bootstrap estimated coverage appears to be tending towards $1 - \alpha$ as $n$ increases, as we would expect. When $\rho_n = .02$, even the histogram estimator is not particularly accurate for the four-cycle. All of this is in line with our theory, which requires suitable lower bounds on $\rho_n$ as a function of $n$ for both bootstraps, with the lower bound being more stringent for the empirical graphon than for the histogram estimator, and for motifs with more nodes and edges such as the four-cycle.

\begin{figure}[!htb]
	\centering
	\includegraphics[scale = .375]{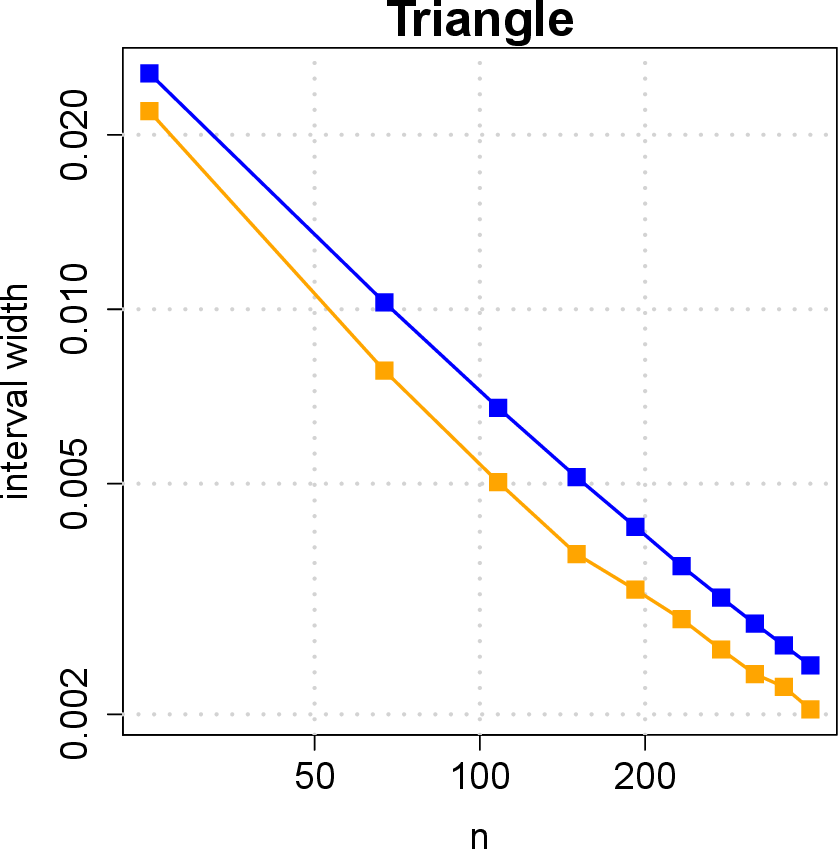}
	\includegraphics[scale = .375]{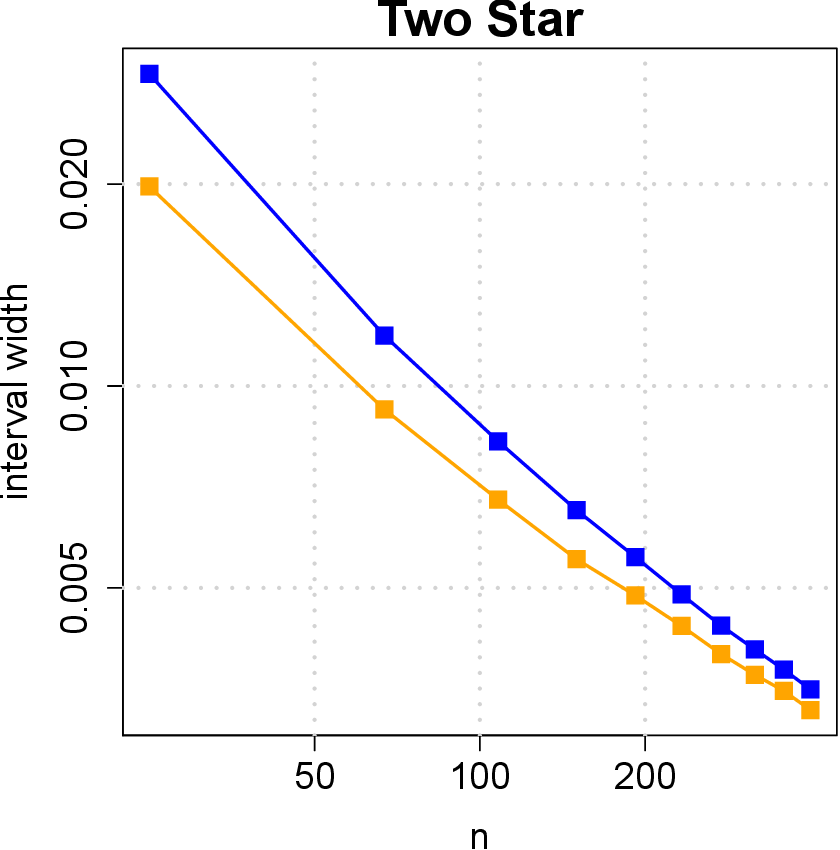}
	\includegraphics[scale = .375]{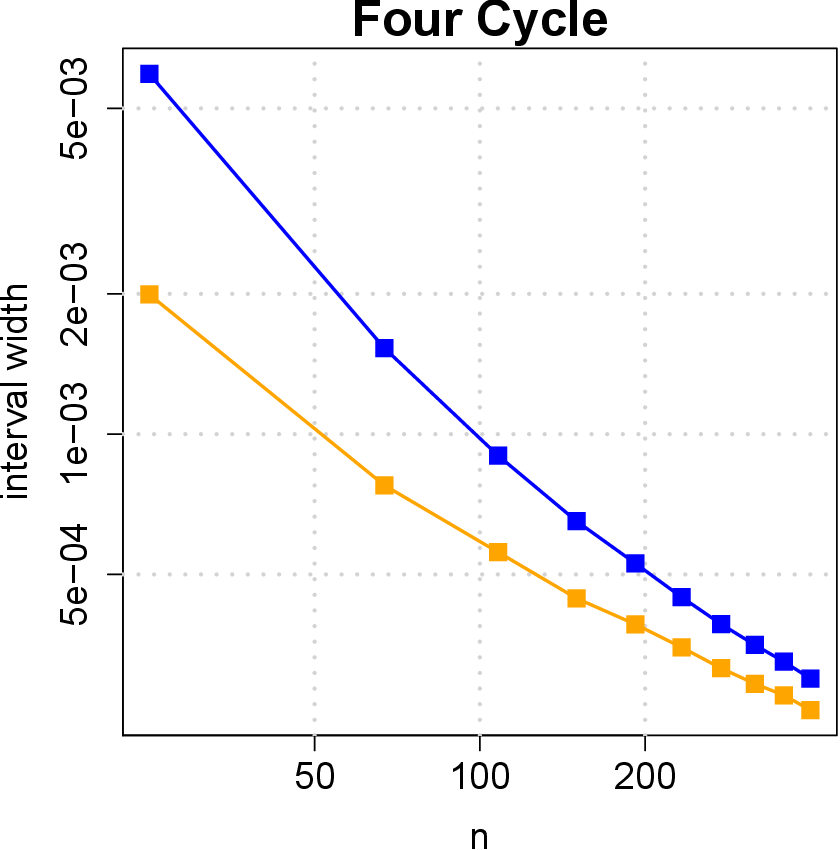}
	\includegraphics[scale = .375]{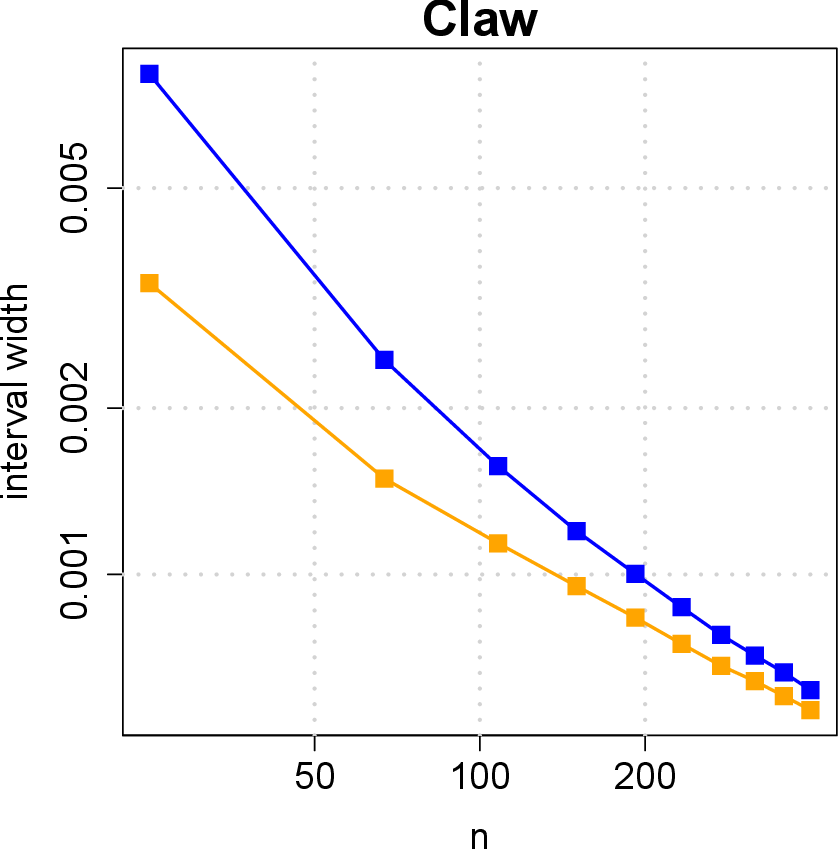}
	\caption{Interval width of bootstrapping methods as a function of sample size for Gaussian link function. Colors and symbols are the same as in Figure~\ref{fig:gaussian_coverage}.}
	\label{fig:gaussian_interval_width}
\end{figure}

Figure \ref{fig:gaussian_interval_width} shows the average interval widths for each of the four aforementioned motifs when $\rho_n= .2$, plotted on a log-log scale. We see that, in addition to having better coverage, the histogram bootstrap  has narrower confidence intervals than the empirical graphon bootstrap. However, as $n$ grows the difference between the two decreases, and the slopes become more and more similar, converging towards the $1/\sqrt{n}$ rate we would expect from Theorems~\ref{empirical graphon bootstrap} and~\ref{thm:convergence_distribution_estimated_graphon} (since $\rho_n = .25$ does not change with $n$.)

\paragraph{Simulation 2: Stochastic Block Model}

In our second simulation, we take
\begin{equation*}
w(u,v) \propto
\begin{cases}
1,~~\textrm{if $u \leq .5, v \leq .5$} \\
.8,~~\textrm{if $u > .5, v > .5$} \\
.25,~~\textrm{otherwise;}
\end{cases}
\end{equation*}
in other words we sample $G_n$ from a stochastic block model with two blocks.

Figure \ref{fig:sbm_coverage} shows estimated coverage as a function of sample size. The conclusions we draw are similar in spirit to those drawn from our first experiment. Both bootstrap methods appear to be approaching $1 - \alpha$ coverage as $n$ increases, when the density parameter $\rho_n$ is sufficiently large (although we note that for the empirical graphon bootstrap, the difference between nominal and actual coverage is larger than it was for the Gaussian link function). The histogram bootstrap has close to nominal coverage across all values of $\rho_n$, but after all this is a case where the true model is itself a histogram, and we should not be surprised that the histogram bootstrap performs quite well. 

\begin{figure}[!htb]
	\centering
	\includegraphics[scale = .375]{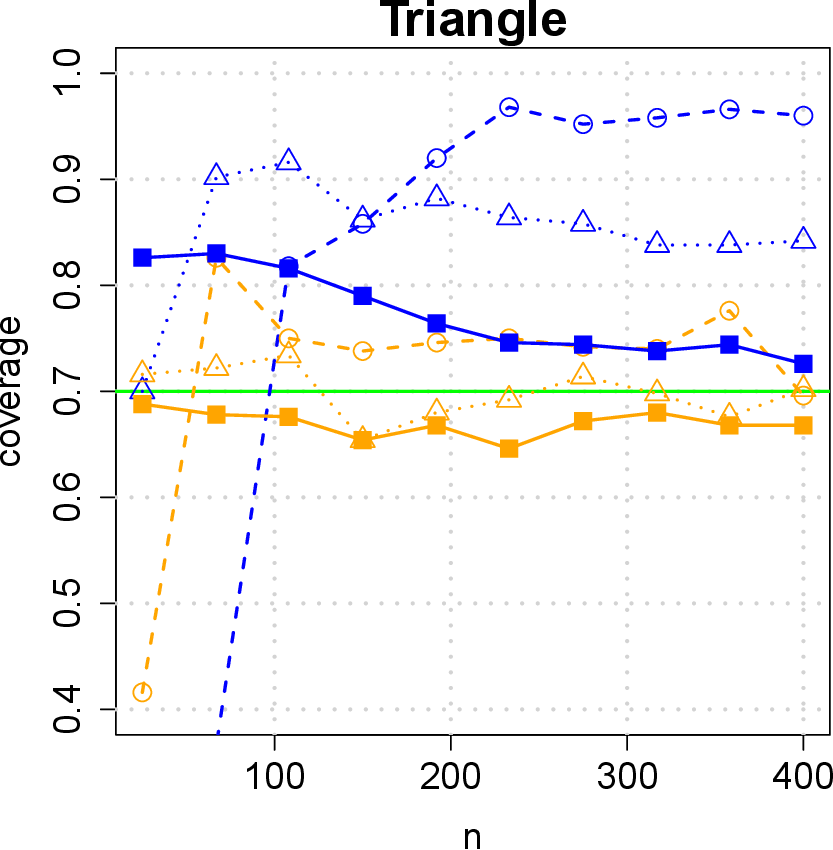}
	\includegraphics[scale = .375]{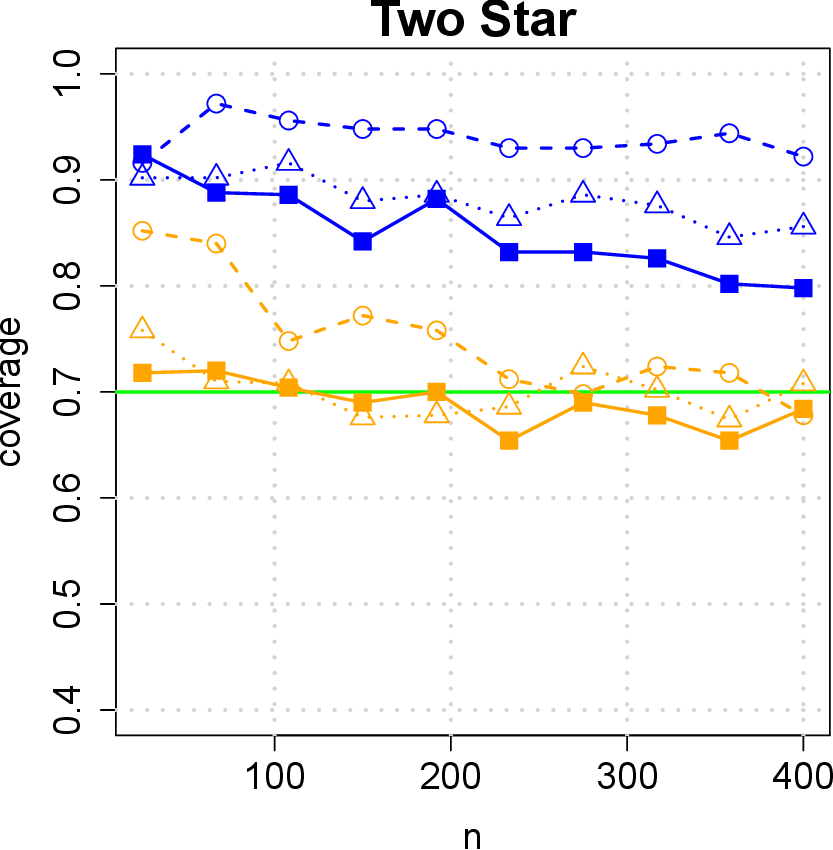}
	\includegraphics[scale = .375]{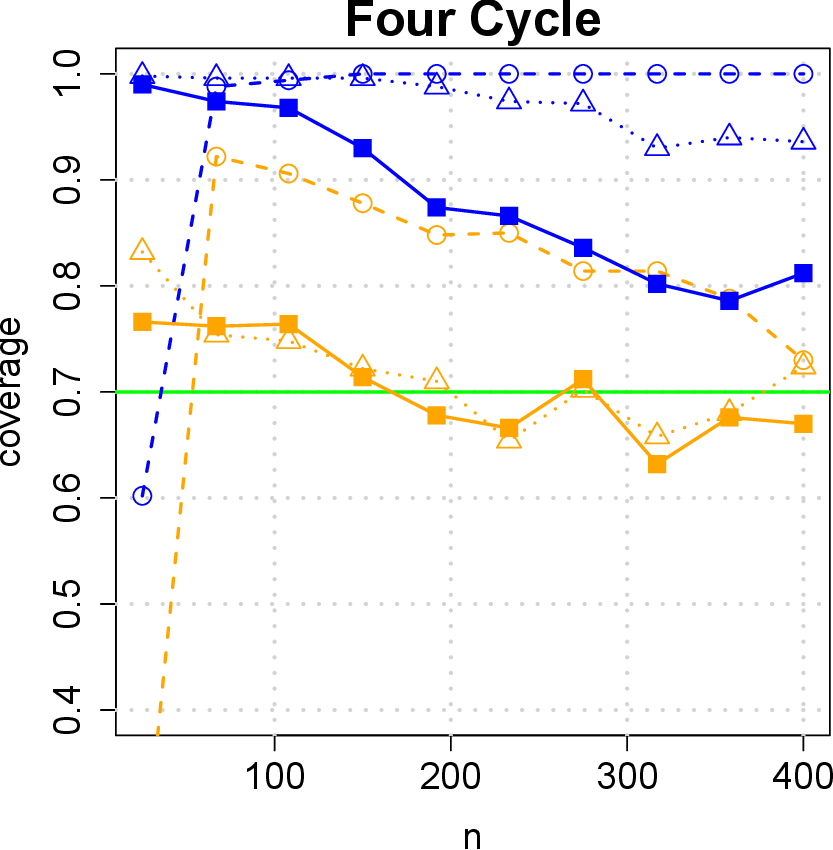}
	\includegraphics[scale = .375]{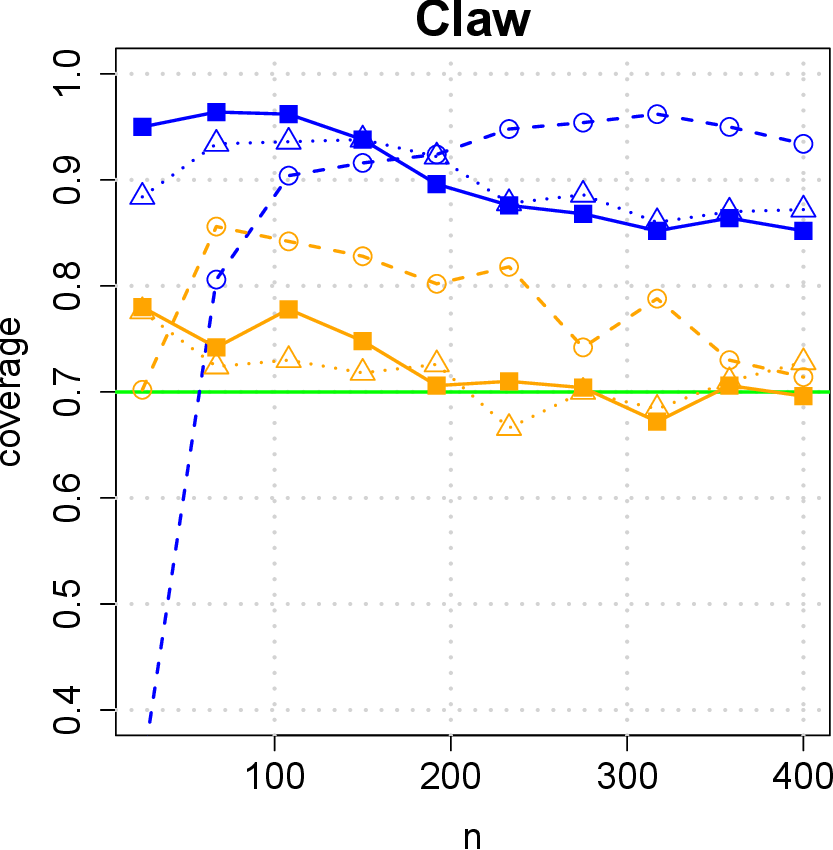}
	\caption{Coverage of bootstrapping methods as a function of sample size for stochastic block model. Colors and symbols are the same as in Figure \ref{fig:gaussian_coverage}.}
	\label{fig:sbm_coverage}
\end{figure}

Figure \ref{fig:sbm_interval_width} shows the average confidence interval width as a function of $n$ when $\rho_n= .25$. Again the conclusions are similar to those of our first experiment. In particular, while resampling using a histogram graphon estimate results in narrower average confidence intervals than resampling using the empirical graphon, the difference between the two decreases as $n$ grows, and the rate at which interval width shrinks tends towards the expected $1/\sqrt{n}$ rate.

\begin{figure}[!ht]
	\centering
	\includegraphics[scale = .375]{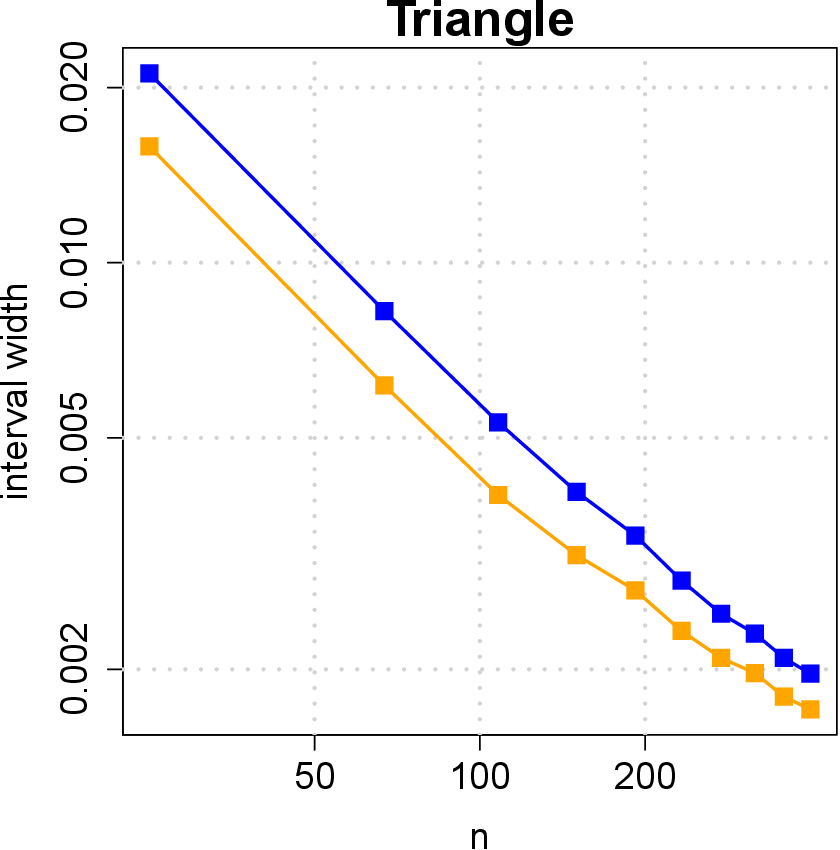}
	\includegraphics[scale = .375]{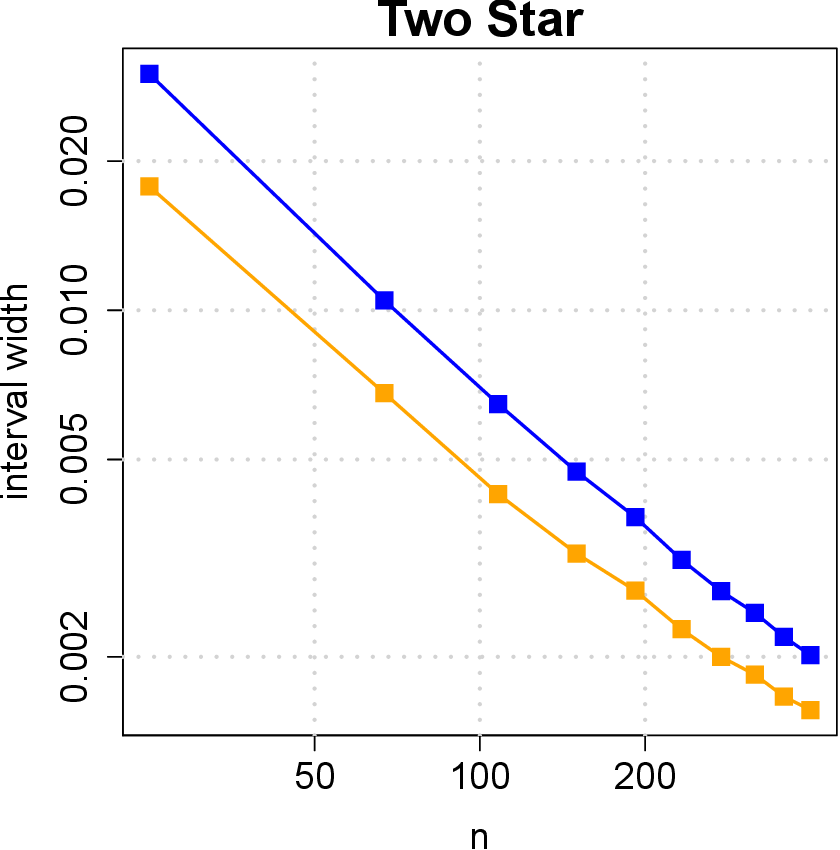}
	\includegraphics[scale = .375]{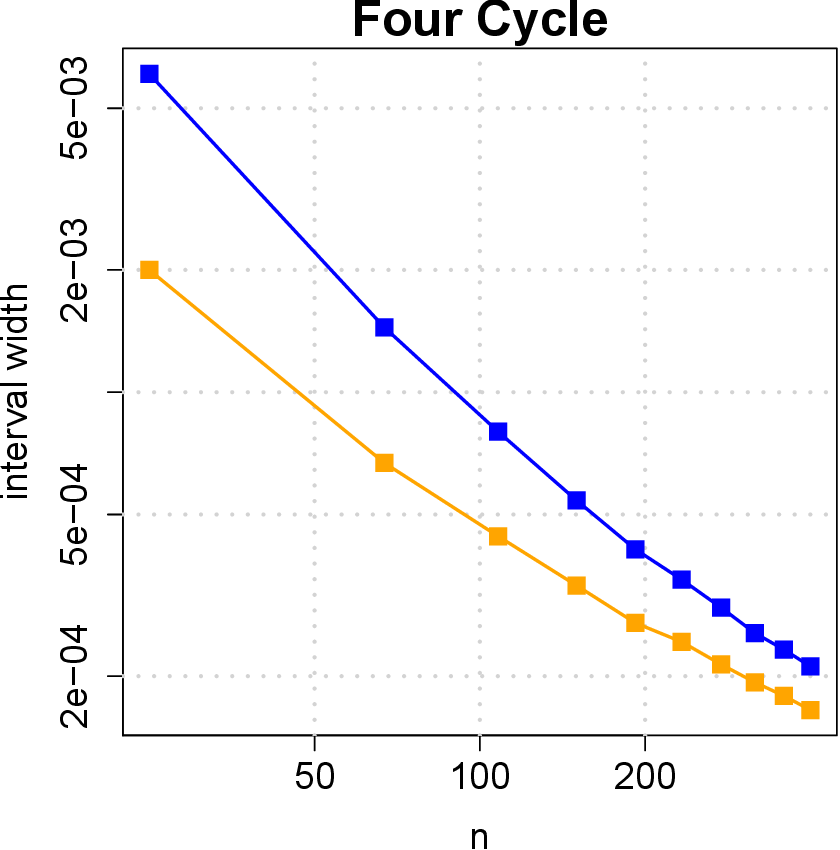}
	\includegraphics[scale = .375]{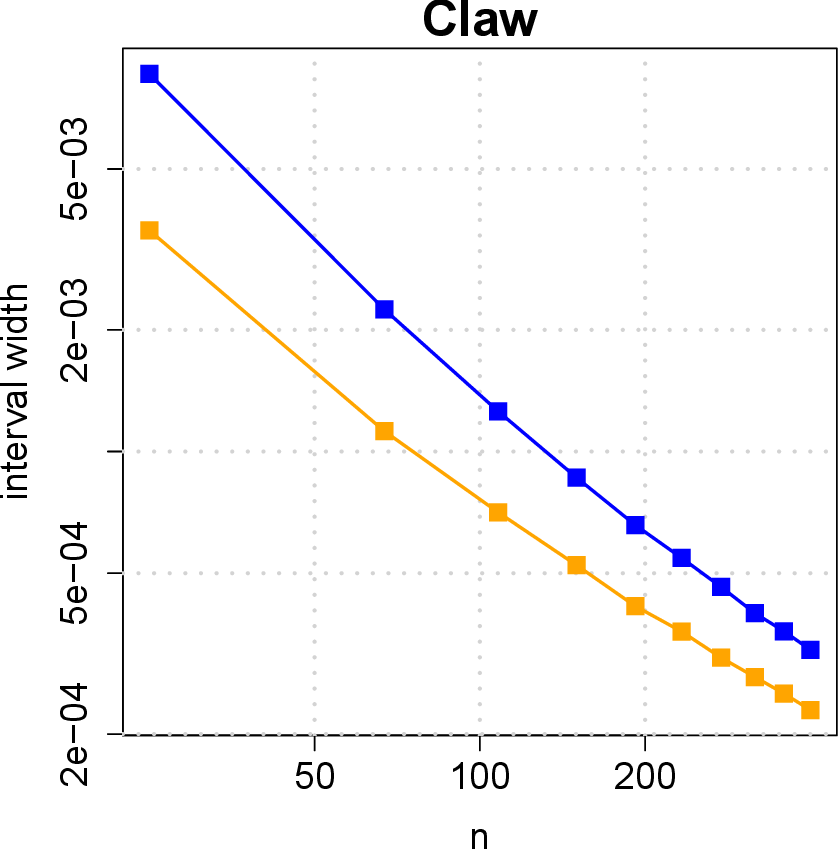}
	\caption{Interval width of bootstrapping methods as a function of sample size for stochastic block model. Colors and symbols are the same as in Figure \ref{fig:gaussian_interval_width}.}
	\label{fig:sbm_interval_width}
\end{figure}

\paragraph{Simulation 3: Horseshoe Link Function}

In our third and final simulation, we let
\begin{equation*}
w(u,v) \propto \exp\bigl(-200(u-v^2)^2/2\bigr) + \exp\bigl(-200(v-u^2)^2/2\bigr).
\end{equation*}
We call this the horseshoe link function, since when $w$ is visualized using a
heat map it looks like a horseshoe.\footnote{We have borrowed this graphon from
	\citet{Wang-network-comparisons}, where it was used as a challenging example
	for graphon estimation.} Due to its odd shape, $w$ is poorly approximated by
a histogram of any reasonable binwidth. Relatedly, the Lipschitz condition
required for Corollary~\ref{cor:convergence_distribution_histogram} is satisfied only for a very large Lipschitz constant $L$. On the other hand, our theory for the empirical graphon
does not depend on the smoothness of the function $w$, and so it applies
equally as well to the horseshoe link function as to the previous --- smoother
--- graphons considered in our first two simulations.

Such is the theoretical state of affairs; Figure \ref{fig:horseshoe_coverage}
shows the empirical reality.  In contrast to the previous two simulations, when
$\rho_n= .25$ it is now the empirical graphon bootstrap which has closer to
nominal coverage, consistently across the different possibilities for $R$ and
$n$. (When $\rho_n= .1$ or $\rho_n= .02$, resampling from a histogram estimate
still results in coverage closer to $1 - \alpha$).  We conclude that for
sufficiently non-smooth link functions $w$ and large values of $\rho_n$, the
empirical graphon may outperform the histogram estimate as a bootstrap
procedure.  Figure \ref{fig:horseshoe_interval_width} shows that average
interval widths are similar for both resampling procedures, reiterating the
takeaway message of Simulations 1 and 2.

\begin{figure}[!ht]
	\centering
	\includegraphics[scale = .375]{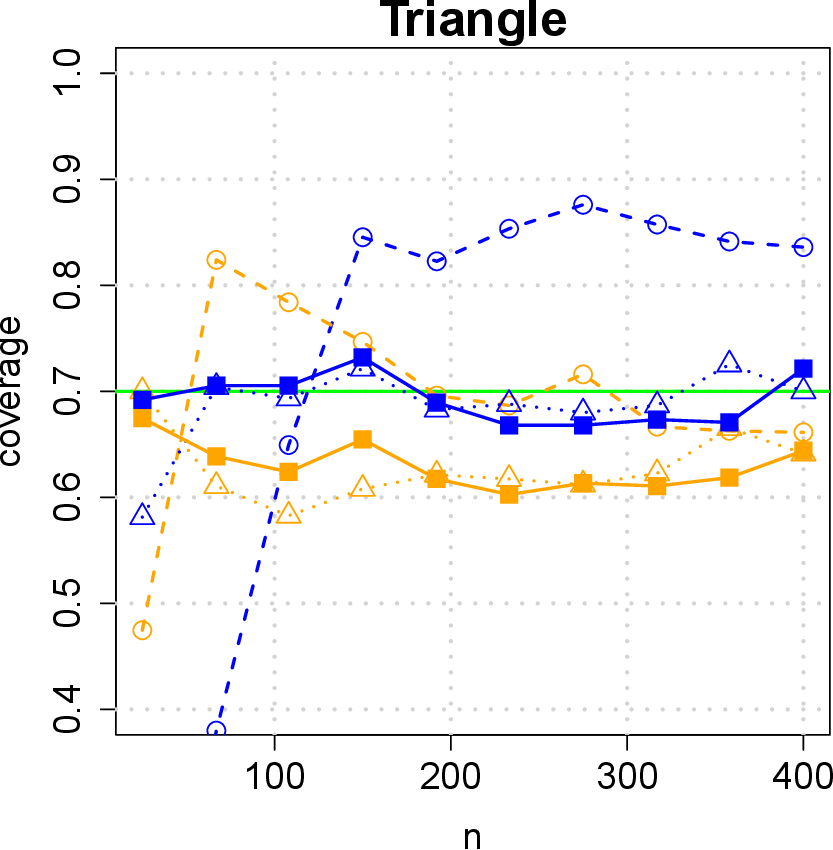}
	\includegraphics[scale = .375]{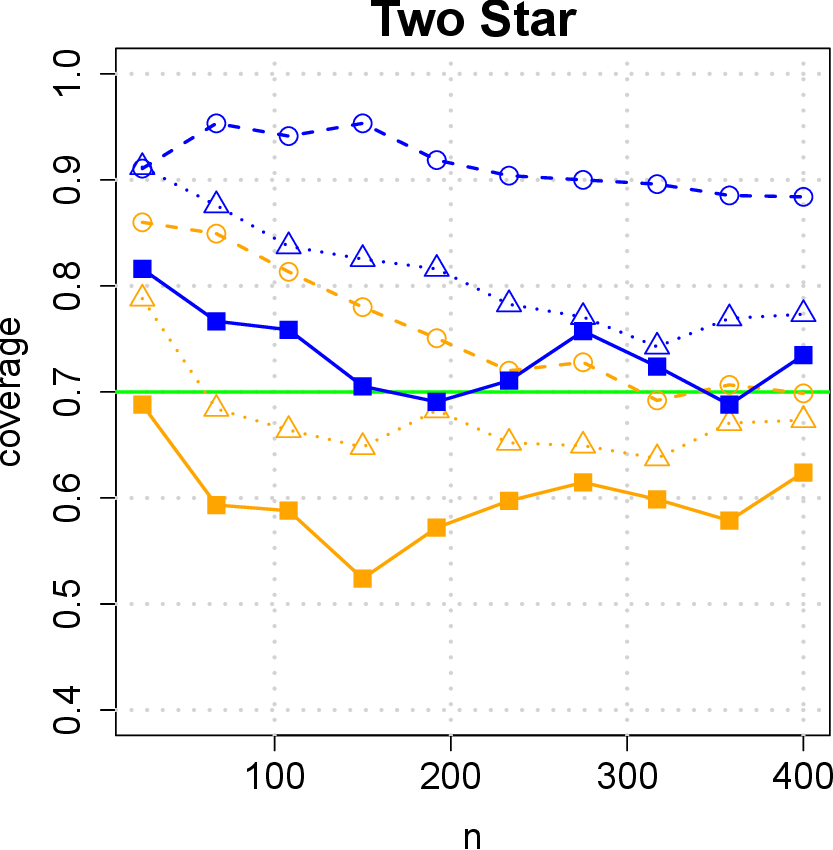}
	\includegraphics[scale = .375]{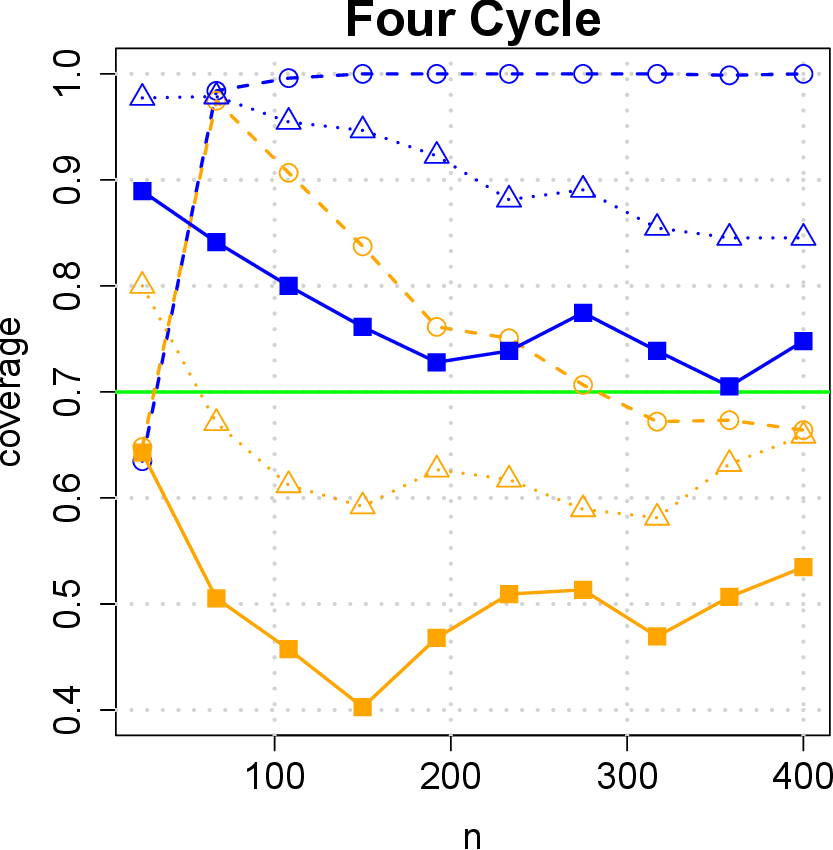}
	\includegraphics[scale = .375]{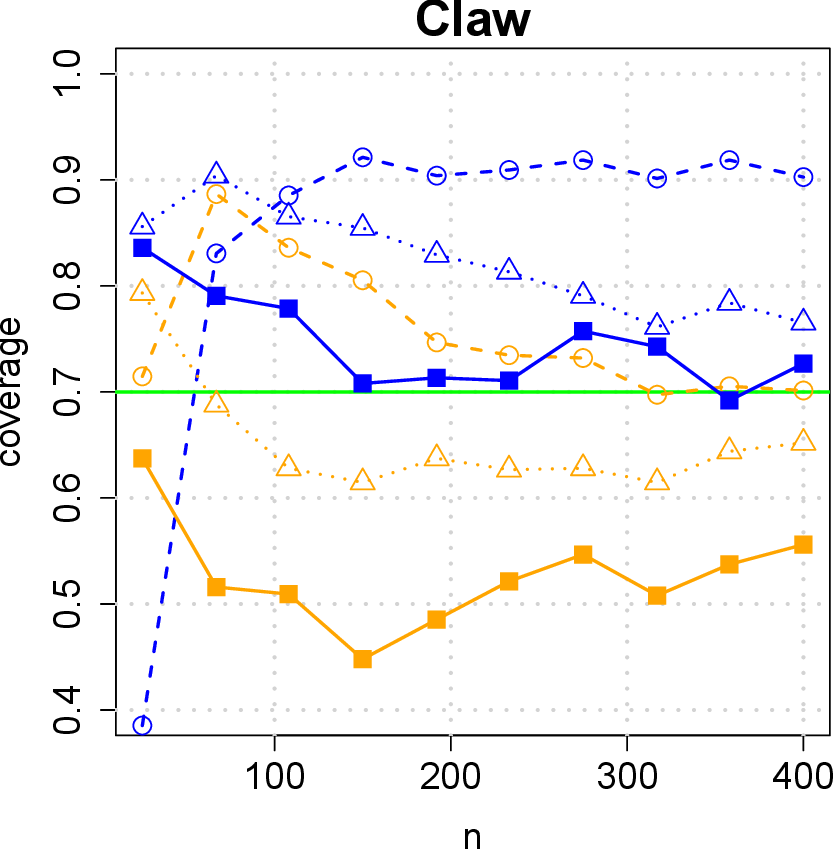}
	\caption{Coverage of bootstrapping methods as a function of sample size for horseshoe graphon. Nominal level $1 - \alpha = .7$ denoted by green line. Colors and symbols are the same as in Figure \ref{fig:gaussian_coverage}.}
	\label{fig:horseshoe_coverage}
\end{figure}

\begin{figure}[!ht]
	\centering
	\includegraphics[scale = .375]{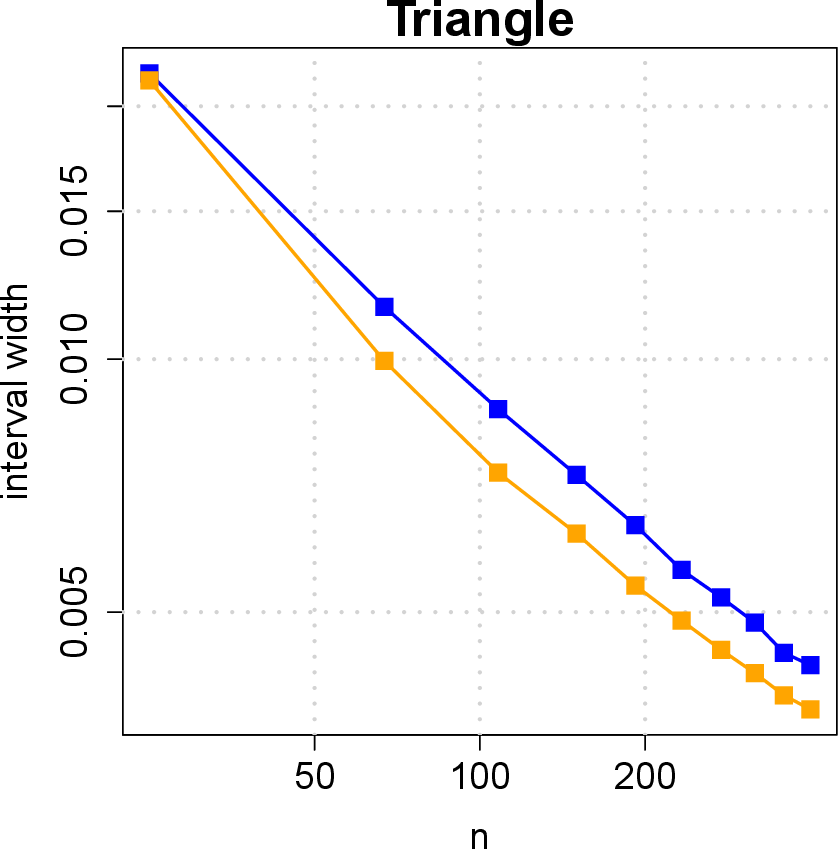}
	\includegraphics[scale = .375]{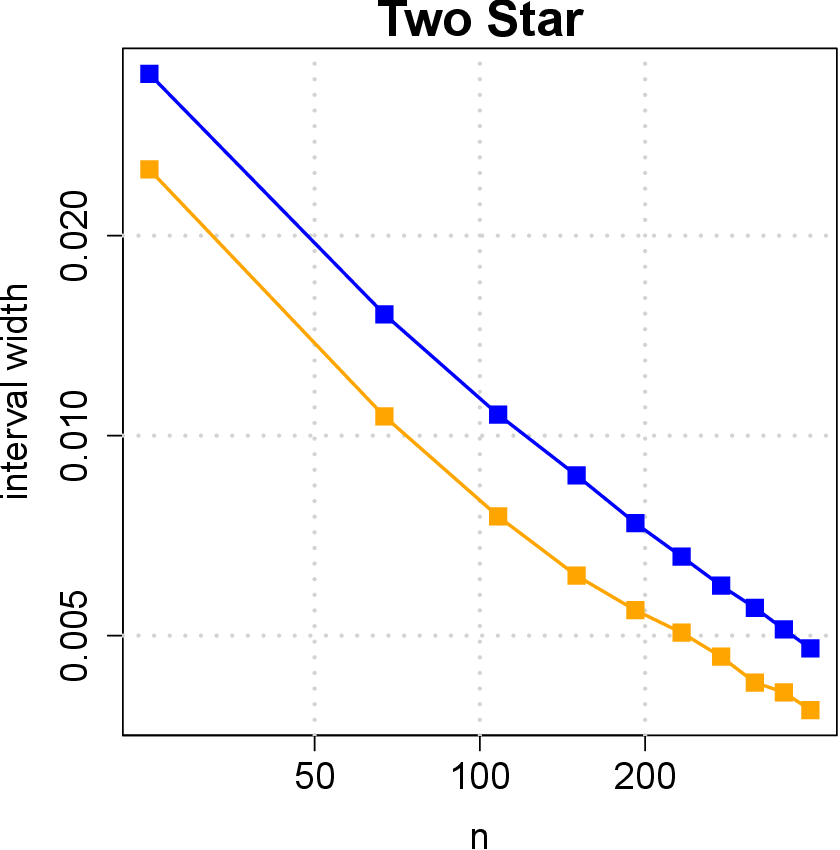}
	\includegraphics[scale = .375]{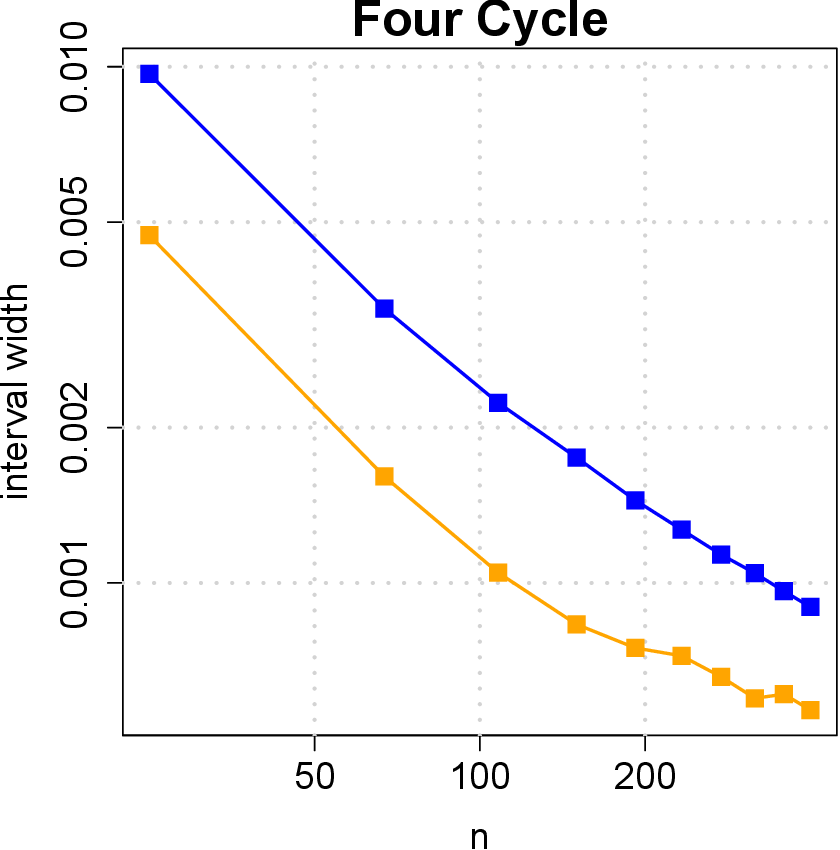}
	\includegraphics[scale = .375]{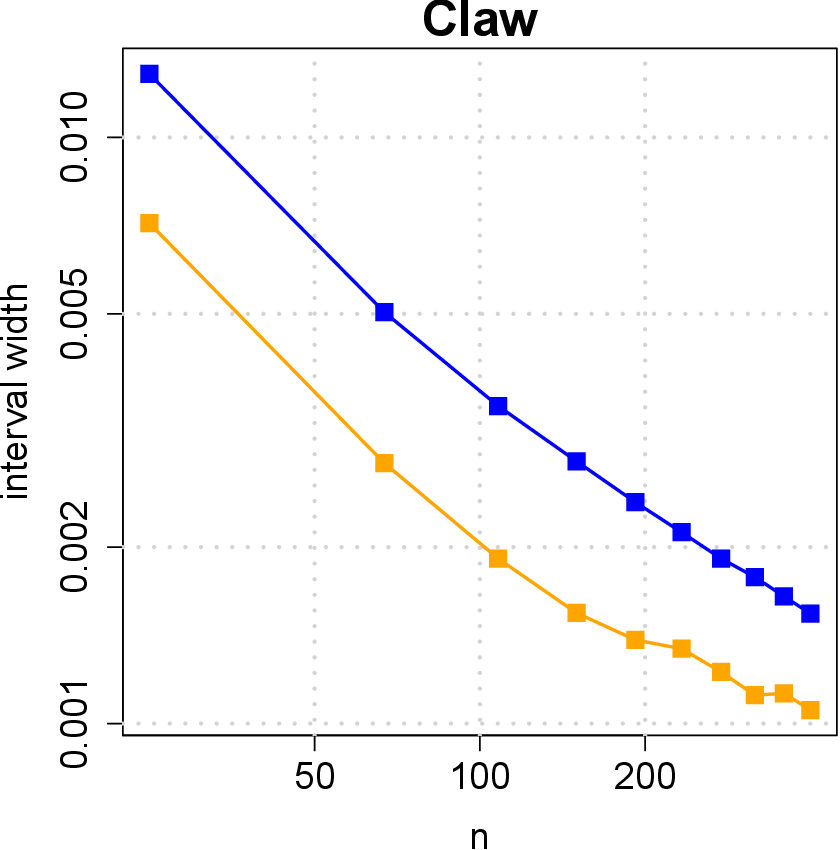}
	\caption{Interval width of bootstrapping methods as a function of sample size for horseshoe graphon. Colors and symbols are the same as in Figure \ref{fig:gaussian_interval_width}.}
	\label{fig:horseshoe_interval_width}
\end{figure}

Taken as a whole, our simulations demonstrate that even for networks with a
moderate number of vertices, both of our bootstraps result in reasonable
estimates of the motif density distribution, across different choices of
graphon function and motif. In particular, confidence intervals formed using
these bootstraps have approximately nominal coverage when $\rho_n$ is
sufficiently large, as suggested by our asymptotic theory, and interval widths
shrink rapidly (approaching the anticipated $1/\sqrt{n}$ rate for fixed
$\rho_n$) as $n$ grows.

% Supporting Propositions and Lemmas
\section{Supporting Propositions and Lemmas}
\label{sec:supporting}

As mentioned earlier, the proofs for Theorems~\ref{empirical graphon bootstrap} and~\ref{thm:convergence_distribution_estimated_graphon} each rely on a series of 3 approximations: first, that
the distribution of a scaled version of $\myP{G_n}$ is close to Gaussian; second, that
the conditional distribution of a scaled version of $\myP{G_m^*}$ is close to
Gaussian; and third, that the variances of both distributions are close. Propositions
\ref{convergences of myP} and \ref{convergence of variance} formalize the first
and third of these assertions (the second assertion is covered in the proof of
the theorems.)

In order to state Proposition \ref{convergences of myP} , we will need an
expression for the (normalized) variance of $\myP{G_n}$. Intuitively, this will
involve taking expectation over the product of indicator functions, of the form
$\Expect{\mathbb{I} \{G_n(\iset) \simeq R \}\mathbb{I} \{G_n(\jset) \simeq R \}}$. 
As we might expect, this quantity can be related to $\myP[W]{h}$ for a set of motifs $W$. 
The intuition is that the event of seeing the same motif on two different subgraphs corresponds to seeing one of
several particular motifs on the union subgraph.  We will call the set of these
motifs $W$ the ``merged copy set'', because it consists of motifs formed by
taking two copies of $R$ and merging some of their vertices.

\begin{Definition} \label{merged copy set} The \textbf{merged copy set} of a
	motif $R$ on $k$ vertices, denoted $MC(R,k)$, is the set of all $W$ such that
	\begin{equation}
	MC(R,k) = \left\{W: |V(W)| = k, \exists \iset,\jset \text{ s.t. } \iset \cup \jset = 1:k, W(\iset) \simeq R, W(\jset) \simeq R \right\}
	\end{equation}
\end{Definition}

Lemma \ref{joint sum over motifs}, next, relates the double sum present in
$\Expect{\myP{G_n}^2}$ to summing over motifs in the various merged copy sets.
The merged copy sets tells us which motifs we need to sum over, but not how
many times we'll need to consider each motif. Luckily, this depends only on
${n \choose k}$ and a combinatorial factor, itself a function of $R$
and $W$, which counts how many ways $W$ can be formed by merging two copies of
$R$.\footnote{For example, if $R = K_2$, the single edge between two nodes,
	there are four different ways we can merge two copies of $R$ to get a
	2-star.}.

\begin{Lemma} \label{joint sum over motifs} For any $p$-node motif $R$,
	\begin{equation} \label{eqn: joint sum over motifs}
	\sum_{\iset,\jset \in S_p(n)}{ \Expect{ \mathbb{I}\{G_{n}(\iset) \simeq R \} \mathbb{I} \{G_{n}(\jset) \simeq R \} }} = \sum_{k = p}^{2p}{ {n \choose k} \sum_{W \in MC(R,k)}{ C_R(W) \myP[W]{h}}},
	\end{equation}
	where $C_R(W)$ counts the number of ways of forming $W$ by merging two copies
	of $R$.
\end{Lemma}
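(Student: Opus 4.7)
The plan is to partition the outer sum according to $k = |\iset \cup \jset|$, which must lie in $\{p, p+1, \ldots, 2p\}$: the value $k = p$ corresponds to $\iset$ and $\jset$ being two orderings of the same $p$-set, and $k = 2p$ to disjoint tuples. For each $k$, I further fix the underlying set $T = \iset \cup \jset$. There are $\binom{n}{k}$ such sets, and by the joint exchangeability of $G_n$ the contribution to the sum is the same for every $T$ of a given size, so a factor of $\binom{n}{k}$ factors out cleanly.

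With $T$ fixed (and, by relabeling, $T = 1:k$), the event $\{G_n(\iset) \simeq R, G_n(\jset) \simeq R\}$ depends only on the labeled induced subgraph $H := G_n(T)$. I condition on the value of $H$ and expand
\begin{equation*}
\Prob{G_n(\iset) \simeq R, G_n(\jset) \simeq R} = \sum_H \Prob{G_n(T) = H} \, \mathbb{I}\{H(\iset) \simeq R, H(\jset) \simeq R\},
\end{equation*}
then swap the order of summation over $(\iset,\jset)$ and $H$, so that the inner sum becomes $N_H := \#\{(\iset, \jset) \in S_p(k) \times S_p(k) : \iset \cup \jset = T, \; H(\iset) \simeq R, \; H(\jset) \simeq R\}$.

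Now I group labeled graphs $H$ on $T$ by their (unlabeled) isomorphism class. By the definition of the merged copy set, the only classes with $N_H > 0$ are $W \in MC(R, k)$. Within such a class, exchangeability (applied to the permutation of $T$ that realizes an isomorphism between two labeled representatives of $W$) gives $\Prob{G_n(T) = H} = \myP[W]{h}$ for every $H$ on $T$ with $H \simeq W$. Thus $W$ contributes $\myP[W]{h} \cdot C_R(W)$ to the sum for a fixed $T$, where $C_R(W) := \sum_{H \simeq W, \, V(H) = T} N_H$; the same exchangeability argument shows that $C_R(W)$ does not depend on the choice of $T$, which justifies the notation.

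Assembling the pieces --- sum over $W$, then over $T$ (giving the factor $\binom{n}{k}$), then over $k$ --- yields the claimed identity. The argument is entirely combinatorial bookkeeping, and the only place where anything subtle happens is the identification of the algebraic $C_R(W)$ above with the informal count ``the number of ways of forming $W$ by merging two copies of $R$'': each triple $(H, \iset, \jset)$ on a fixed $T$ corresponds to a labeling of $V(W)$ by $T$ together with a choice of two ordered $R$-copies inside the resulting graph whose vertex sets exhaust $V(W)$. The main obstacle, such as it is, is keeping track of the automorphism groups of $R$ and $W$ when cross-checking the combinatorial factor against the paper's small example with $R = K_2$.
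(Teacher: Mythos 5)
Your proof is correct and takes essentially the same route as the paper: partition by $k = |\iset \cup \jset|$, use exchangeability to pull out the $\binom{n}{k}$ factor and to equate probabilities of isomorphic labeled subgraphs, and bundle the remaining combinatorics into a count $C_R(W)$ depending only on $R$ and $W$. The one presentational difference is that you condition on the labeled induced subgraph $H = G_n(T)$ and then group $H$ by isomorphism class, whereas the paper expands $\mathbb{I}\{G_n(\iset)=R\}\,\mathbb{I}\{G_n(\jset)=S\}$ directly as a sum $\sum_{Q}\mathbb{I}\{G_n(\mathbf{l}) = R_{\iset}\cup S_{\jset}\cup Q\}$ over the possible ``cross'' edge sets $Q$ between the two vertex remainders---these are the same enumeration of the triples $(\iset,\jset,H)$ viewed from opposite ends, so nothing of substance changes.
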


\begin{Lemma} \label{var_of_p} Define
	\begin{equation}
	\sigma^2_R(h_n) \equiv \Var{\frac{\sqrt{n}}{\rho_n^{|E(R)|}} \myP{G_n}}.
	\end{equation}
	Then,
	\begin{align*}
	\sigma^2_R(h_n) & = \frac{n}{(\rho_n^{|E(R)|} {n \choose p} |p!N(R)|)^2} \sum_{k=p}^{2p-1} {n \choose k} \sum_{S \in MC(R,k)} C_R(S) \myP[S]{h_n} \\ & \quad - \left(1-\frac{{n-p \choose p}}{{n \choose p}} \right) \myNormP{h_n}^2 .
	\end{align*}
	Moreover, if
	$\int_{[0,1]^2}{ w^{2|E(R)|}(u_1,u_2) du_{1:2}} < \infty$, and either $R$ is acyclic and $\rho_n = \omega(n^{-1})$ or $R$ is general and $\rho_n = \omega(n^{-2/p})$, then
	\begin{equation}
	\sigma^2_R := \lim_{n \to \infty}{\sigma^2_R(h_n)} < \infty.
	\end{equation}
\end{Lemma}

Now that we know that $\rho_n^{-|E(R)|}$ is the appropriate normalization, we
can state a central limit theorem from~\cite{Bickel-Chen-Levina-method-of-moments}  for the empirical subgraph densities.

\begin{Proposition}[Theorem~1 of \citep{Bickel-Chen-Levina-method-of-moments}] \label{convergences of myP} Suppose $w \neq 1$ on a subset of $[0,1]^2$ with positive Lebesgue measure. For a $p$-node motif $R$, if (i) $\int_{[0,1]^2}{  w^{2|E(R)|}(u_1,u_2) du_{1:2}} < \infty$ and (ii) either $R$ is acyclic and $\rho_n = \omega(n^{-1})$, or $R$ is general and $\rho_n = \Omega(n^{-\frac{2}{p}})$ then
	\begin{equation} \label{convergence in dist of myP} 
	\sqrt{n} \rho_n^{-|E(R)|}\Bigl(\myP{G_n} - \myP{h}\Bigr) \convdist \mathcal{N}(0,\sigma^2_R)
	\end{equation}
	Also
	\begin{equation} \label{convergence in prob of myP} \myNormP{G_n} \convprob
	\myNormP{h}
	\end{equation}
\end{Proposition}
We reiterate that the requirement $\rho_n = \omega(n^{-2/p})$ in Theorem~\ref{thm:convergence_distribution_estimated_graphon} essentially matches that of Proposition~\ref{convergences of myP}, and that when $\rho_n = o(n^{-2/p})$ the limiting distribution of $\sqrt{n} \rho_n^{-|E(R)|}\bigl(\myP{G_n} - \myP{h}\bigr)$ is not known. 

As stated above, one step required for the proofs of our main theorems is to
show that the bootstrap estimate of variance, for both of our procedures, is
close to $\sigma_R^2$. Proposition \ref{convergence of variance} formalizes
this statement.

\begin{Proposition} \label{convergence of variance} 
	For a motif $R$, under the conditions of Theorem \ref{empirical graphon bootstrap}, the empirical graphon $\hat{h} = \hat{h}_{adj}$ satisfies
	\begin{equation}
	\label{eqn:convergence_of_variance_1}
	\sigma^2_R(\hat{h}) \convprob \sigma^2_R,
	\end{equation}
	and under the conditions of Theorem \ref{thm:convergence_distribution_estimated_graphon}, for any graphon estimate $\hat{h}$ which satisfies~\eqref{asmp:convergence_estimation_1},
	\begin{equation}
	\sigma^2_R(\hat{h}) \convprob \sigma^2_R.
	\end{equation}
\end{Proposition}

Finally, in order to show that Proposition \ref{convergence of variance} holds,
we must have that the $P_S(\hat{h})$ terms in $\sigma^2_R(\hat{h})$ converge at
the appropriate rate to the $P_S(h)$ terms in $\sigma^2_R$. Lemmas
\ref{Convergence of subgraph density} and \ref{lem:convergence_subgraph_density_estimated_graphon} establish this, respectively, for the empirical graphon and for an arbitrary estimator that satisfies~\eqref{asmp:convergence_estimation_1}.

\begin{Lemma} \label{Convergence of subgraph density} For any $k$-node motif
	$S$, if
	$\int_{[0,1]^2}{ w^{2|E(S)|}(u_1,u_2) du_{1:2}} < \infty$, and
	$\rho_n = \omega(n^{-1})$, then for $\hat{h} = \hat{h}_{adj}$
	\begin{equation}
	\left|\frac{\myP[S]{\hat{h}} - \myP[S]{h_n}}{\rho_n^{k-1}}\right| \convprob 0.
	\end{equation}
	If additionally $\rho_n = \omega(n^{-\frac{1}{k}})$, then
	\begin{equation}
	\left| \frac{\myP[S]{\hat{h}} - \myP[S]{h_n}}{\rho_n^{|E(S)|}} \right|  \convprob 0.
	\end{equation}
\end{Lemma}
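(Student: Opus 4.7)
The plan is to reduce the lemma to a statement about sums of labeled-subgraph indicators, then separate contributions from ordered tuples with distinct indices from those with repeats. Because $\hat h_{adj}$ is piecewise constant on the $n^{-1}\times n^{-1}$ grid, the integral defining $\myP[S]{\hat h_{adj}}$ collapses to the finite sum
\begin{equation*}
\myP[S]{\hat h_{adj}} \;=\; \frac{1}{n^k}\sum_{\iset\in[n]^k} T_S(\iset), \qquad T_S(\iset)\;:=\;\prod_{(a,b)\in E(S)} A_{i_a i_b}\prod_{(a,b)\notin E(S)}(1-A_{i_a i_b}).
\end{equation*}
I would split $[n]^k$ into the distinct tuples $\iset\in S_n(k)$ and the duplicated tuples $D_n(k):=[n]^k\setminus S_n(k)$, writing $\myP[S]{\hat h_{adj}}=\mathrm{Dist}_n+\Delta_n$ and bounding the two pieces separately.

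For the distinct part, the identity $\sum_{S'\simeq S,\,V(S')=[k]} T_{S'}(\iset)=\mathbb{I}(G_n(\iset)\simeq S)$ together with the symmetry across labeled copies (each $S'\simeq S$ yields the same sum by relabeling coordinates) gives $\sum_{\iset\in S_n(k)} T_S(\iset)=n^{\underline{k}}\,\myP[S]{G_n}$, so $\mathrm{Dist}_n=(n^{\underline{k}}/n^k)\,\myP[S]{G_n}$. Proposition \ref{convergences of myP} then yields $\mathrm{Dist}_n-\myP[S]{h_n}=O_p(\rho_n^{|E(S)|}/\sqrt{n})$, which vanishes after dividing by either $\rho_n^{k-1}$ (using $|E(S)|\ge k-1$ for connected $S$) or $\rho_n^{|E(S)|}$.

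The substantive work is bounding $\Delta_n$. I would index duplicate tuples by the set partition $\pi$ of $[k]$ they induce: tuples with induced partition $\pi$ correspond bijectively to injections $\bar\iset:[k']\hookrightarrow[n]$ with $k':=|\pi|<k$. Because $G_n$ has no loops, $T_S(\iset)=0$ unless $\pi$ respects $E(S)$, i.e., no edge of $S$ lies inside a block of $\pi$. When $\pi$ does respect $S$, the within-block non-edge factors equal one and $T_S(\iset)$ is dominated by the indicator that every edge of the quotient graph $S/\pi$ is present in $G_n$ at $\bar\iset$. Taking expectations (using $\int w^{|E(S)|}<\infty$ and H\"older to control the non-induced homomorphism integral) and applying Markov's inequality give
\begin{equation*}
\Delta_n^{(\pi)}\;=\;O_p\!\left(n^{-d(\pi)}\rho_n^{|E(S/\pi)|}\right),\qquad d(\pi):=k-k'\ge 1.
\end{equation*}

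Summing over the finitely many admissible $\pi$, the first claim follows: dividing by $\rho_n^{k-1}$ and using $|E(S/\pi)|\ge k-d(\pi)-1$ (from connectivity of $S/\pi$, inherited from $S$) makes each contribution $O_p((n\rho_n)^{-d(\pi)})\to 0$ since $n\rho_n\to\infty$. For the sharper claim we divide by $\rho_n^{|E(S)|}$, producing $O_p(n^{-d(\pi)}\rho_n^{-e(\pi)})$ with $e(\pi):=|E(S)|-|E(S/\pi)|$, so we need $n^{d(\pi)}\rho_n^{e(\pi)}\to\infty$ for every admissible $\pi$. The main obstacle is the combinatorial estimate $e(\pi)\le(k-2)\,d(\pi)$, which I would obtain by realizing $\pi$ as a sequence of $d(\pi)$ binary merges (one can always choose such a sequence respecting $S$ at every intermediate stage) and observing that at step $i$ the merged pair has common neighborhood of size at most $k-i-1\le k-2$ in the current quotient, so at most $k-2$ edges collide per step. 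Given this bound, $\rho_n=\Omega(n^{-1/k})$ forces $n^{d(\pi)}\rho_n^{e(\pi)}=\Omega(n^{2d(\pi)/k})\to\infty$ uniformly in $\pi$, completing the argument.
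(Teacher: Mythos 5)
Your proposal is correct and follows essentially the same route as the paper: both expand $\myP[S]{\hat h_{adj}}$ as the Riemann sum $\frac{1}{n^k}\sum_{\iset\in[n]^k}T_S(\iset)$, peel off the distinct-index contribution (which equals $\frac{n^{\underline{k}}}{n^k}\myP[S]{G_n}$ and is handled via Proposition \ref{convergences of myP}), and bound the repeated-index contributions by organizing them according to the quotient graph of $S$ and counting edges lost per merge. Where you differ is in bookkeeping: you control the repeated-index terms by a direct expectation-plus-Markov bound and a clean per-step estimate of at most $k-2$ edges lost (which is also the correct constant — the paper writes $k-1$ for the first merge, a small slip, and its $\frac{(k+j-1)(k-j)}{2}$ should appear with $\le$ rather than $\ge$), whereas the paper invokes $\myP[W]{G_n}/\rho_n^{|E(W)|}=O_p(1)$ via Proposition \ref{convergences of myP} for each quotient $W$; your Markov route is marginally more elementary and avoids invoking integrability hypotheses on $w$ beyond what the lemma states.
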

In order to prove~\eqref{eqn:convergence_of_variance_1} in Proposition~\ref{convergence of variance}, we will need to invoke Lemma~\ref{Convergence of subgraph density} with respect to motifs $S$ in the merged copy set $MC(R,k)$, for $k = p,\ldots,2p - 1$. For such motifs $S$, the number of edges $|E(S)|$ may be as large as $2|E(R)|$, and the number of nodes as large as $2p - 1$. This explains why in Theorem~\ref{empirical graphon bootstrap}, we require that $w \in L^{4|E(R)|}([0,1]^2)$ and $\rho_n = \omega(n^{-1/2p})$. On the other hand, the requirements of Lemma~\ref{lem:convergence_subgraph_density_estimated_graphon} are weaker, and translate to weaker requirements in Theorem~\ref{thm:convergence_distribution_estimated_graphon}.

\begin{Lemma} \label{lem:convergence_subgraph_density_estimated_graphon} 
	For any motif $S$, if $\|w\|_{|E(S)|} < \infty$ and $\hat{h}$ is a graphon estimate which satisfies
	\begin{equation}
	\label{asmp:convergence_subgraph_density_estimated_graphon}
	\rho_n^{-1} \|\hat{h} - h_n\|_{|E(S)|} \convprob 0,
	\end{equation}
	then
	\begin{equation}
	\label{eqn:convergence_subgraph_density_estimated_graphon}
	\left| \frac{\myP[S]{\hat{h}} - \myP[S]{h_n}}{\rho_n^{|E(S)|}} \right|  \convprob 0.
	\end{equation}
\end{Lemma}

% Proofs
\section{Proofs}
\label{sec:proofs}

\subsection{Proof of Theorem \ref{empirical graphon bootstrap}}
We can upper bound (\ref{empirical graphon bootstrap formula}) by
\begin{align*}
& \sup_{x}{ \left|\mathbb{P}\left(\frac{\sqrt{m}}{\bar{\rho}_n^{|E(R)|}} \left(\myP{G_m^*} - \myP{\hat{h}} \right) \le x \middle| G_n\right) - \Phi\left(\frac{x}{\sigma_R(\hat{h})}\right) \right|} \\
& + \sup_{x}{ \left| \Phi\left(\frac{x}{\sigma_R(\hat{h})}\right) - \Phi\left(\frac{x}{\sigma_R({h})}\right) \right|} \\
& + \sup_{x}{\left|\Phi\left(\frac{x}{\sigma_R({h})}\right)  - \Prob{\frac{\sqrt{n}}{\hat{\rho}_n^{|E(R)|}} \left(\myP{G_n} - \myP{h}\right) \le x}\right|}.
\end{align*}

The third term goes to 0 by Proposition \ref{convergences of myP}. The second
term goes to 0 by Proposition \ref{convergence of variance}. All that remains
is to bound the first term. Note that to do so, we cannot simply invoke Proposition~\ref{convergences of myP}, because the empirical graphon $\hat{h}$ is random and changing with $n$. Instead, we bound the first term using a Berry-Esseen inequality for U-statistics, keeping in mind that $\myP{G_n^*}$ is strictly a U-statistic,
conditional on $G_n$, because once the $\epsilon_i^*$ are specified there is no
more randomness in $\myP{G_n^*}$. To ease notation, for $\mathbf{i} \in S_n(p)$ define
\begin{align}
\epsilon_{\mathbf{i}}^* & = (\epsilon_{i_1}^*,\ldots , \epsilon_{i_p}^*)\\
\hat{H}_R(\epsilon_{\mathbf{i}}^*) & := \Expect{\mathbb{I}(G_n^*(\mathbf{i}) = R)\middle| \epsilon^*,G_n}
\end{align}

and note that 

\begin{align}
\myP{G_m^*} & = \frac{1}{N(R) p! {n \choose p}} \sum_{\iset \in S_n(p)} \sum_{R_1 \sim R} \hat{H}_R(\epsilon_{\mathbf{i}}^*)
\end{align}

Therefore, by \citet{Janssen-CLT-for-von-Mises-statistics}, we have that
\begin{equation}
\sup_{x}{\left| \Prob{\frac{\sqrt{m}}{\bar{\rho}_n^{|E(R)|}} \left(\myP{G_m^*} - \myP{\hat{h}} \right) \le x \middle| G_n} - \Phi\left(\frac{x}{\sigma_R(\hat{h})}\right) \right| \le C \frac{\bar{\nu_3}}{\bar{\sigma}_g^{3} m^{\frac{1}{2}}}}
\end{equation}
where
\begin{eqnarray}
\bar{\sigma}_g^{2} & := & \Var{ \Expect{\sum_{R_1 \sim R} \frac{\hat{H}_{R_1}(\epsilon_{1:p}^*)}{p! N(R) \bar{\rho}_n^{|E(R)|}} \middle|   \epsilon_1^*,G_n} \middle| G_n} \\
\bar{\nu_3} &:= & \Expect{ \left| \sum_{R_1 \sim R}{\frac{\hat{H}_{R_1}(\epsilon_{1:p}^*) - \myP{\hat{h}}}{p! N(R) \bar{\rho}_n^{|E(R)|}}} \right|^3 \middle|  G_n }
\end{eqnarray}

We first bound $\bar{\nu}_3$, using the fact that $\hat{H}_R(\epsilon_{1:p}^*)^3 = \hat{H}_R(\epsilon_{1:p}^*)$. 
By Holder's Inequality, we have
\begin{equation}
\Expect{\hat{H}_{R_1}(\epsilon_{1:p}^*)\hat{H}_{R_2}(\epsilon_{1:p}^*)\hat{H}_{R_3}(\epsilon_{1:p}^*)} \le \Expect{\hat{H}_{R}(\epsilon_{1:p}^*)^3}
\end{equation}
By Lemma \ref{Convergence of subgraph density},  we have that
\begin{equation}
\left|\frac{\myP[R]{\hat{h}} - \myP[R]{h_n}}{\rho_n^{|E(R)|}}\right| \convprob 0
\end{equation}
since either $|E(R)| = p - 1$ or $\rho_n = \omega(n^{-\frac{1}{p}})$. Therefore,

\begin{equation}
\myP{\hat{h}} = O_p\left(\myP{h_n}\right) = O_p\left(\rho_n^{|E(R)|}\right)
\end{equation}
where the last statement is implied by the condition $\int_{[0,1]^2}{ w^{4|E(R)|}(u_1,u_2) du_{1:2}} < \infty$.
Putting these together, we can upper bound $\bar{\nu}_3$,
\begin{align}
\bar{\nu_3} & \leq \frac{8}{\bar{\rho}_n^{|3E(R)|}} (\Expect{\hat{H}_R(\epsilon_{1:p}^*)^3 \middle| G_n}  +  \myP{\hat{h}}^3) \\
& = \frac{8}{\bar{\rho}_n^{|3E(R)|}} (\Expect{\hat{H}_R(\epsilon_{1:p}^*) \middle| G_n}  +  \myP{\hat{h}}^3) \\
& = \frac{8}{\bar{\rho}_n^{|3E(R)|}} (\myP{\hat{h}} + \myP{\hat{h}}^3) \\
& = O_p\left(\rho_n^{-2|E(R)|}\right).
\end{align}
Turning to $\bar{\sigma}_g^2$, we have
\begin{align}
\bar{\sigma}_g^2 &= \Var{\Expect{\frac{\sum_{R_1 \sim R} \hat{H}_{R_1}(\epsilon_{1:p}^*)}{p! N(R) \bar{\rho_n}^{|E(R)|}}\middle| \epsilon_1^*, G_n} \middle| G_n} \\
&= \frac{\sum_{S \in MC(R,1)}C_R(S) \left(\myP[S]{\hat{h}} - \myP{\hat{h}}^2 \right) }{\bar{\rho_n^{2|E(R)|}}},
\end{align}
with the second equality following because $$\Expect{\Expect{\hat{H}_{R_1}(\epsilon_{1:p}^*)|\epsilon_1^*}\Expect{\hat{H}_{R_2}(\epsilon_{1:p}^*)|\epsilon_1^*} \middle| G_n} = \myP[S]{\hat{h}}$$ for some $S$ in $MC(R,1)$. If $R$ is acyclic,
$2|E(R)| = 2p - 2 = |V(S)|-1$. Otherwise, by assumption,
$\rho_n = \omega(n^{-\frac{1}{2p}})$ and $|V(S)| = 2p-1$. Either way, by Lemma
\ref{Convergence of subgraph density}, we have that for all $S \in MC(R,1)$,
\begin{equation}
\left|\frac{\myP[S]{\hat{h}} - \myP{\hat{h}}^2 - \myP[S]{h} + \myP{h}^2}{\bar{\rho_n}^{2|E(R)|}}\right| \convprob 0
\end{equation}
Note that $\frac{\myP[S]{h} - \myP{h}^2}{\bar{\rho_n}^{2|E(R)|}} = \myNormP[S]{h} -
(\myNormP{h})^2 = \theta(1)$ by Holder's inequality, unless $w = 1$ almost everywhere. 
So, $\bar{\sigma}_g^2 = \Omega_p(1)$. Finally, if $m = \omega(\rho_n^{-4|E(R)|})$,
\begin{equation}
\bar{\nu_3} \bar{\sigma}_g^{-3} m^{-\frac{1}{2}} = o_p(1)
\end{equation}
which completes the proof.

\subsection{Proof of Theorem~\ref{thm:convergence_distribution_estimated_graphon}}
We begin, similarly to the Proof of Theorem~\ref{empirical graphon bootstrap}, by upper bounding \eqref{eqn:convergence_distribution_estimated_graphon} via the triangle inequality by
\begin{align}
\nonumber & \sup_{x}{\left| \Prob{\frac{\sqrt{n}}{\bar{\rho}_n^{|E(R)|}} \left(\myP{G_n^*} - \myP{\hat{h}} \right) \le x \middle| G_n} - \Phi\left(\frac{x}{\sigma^2_R(\hat{h})}\right) \right|} \\
& + \sup_{x}{\left| \Phi\left(\frac{x}{\sigma^2_R(\hat{h})}\right) - \Phi\left(\frac{x}{\sigma^2_R({h})}\right) \right|} \\
& + \sup_{x}{\left|\Phi\left(\frac{x}{\sigma^2_R({h})}\right)  - \Prob{\frac{\sqrt{n}}{\hat{\rho}_n^{|E(R)|}} \left(\myP{G_n} - \myP{h}\right) \le x}\right|}.
\end{align}

The third term goes to 0 by Proposition \ref{convergences of myP}. The second
term goes to 0 by Proposition \ref{convergence of variance}. To bound the first
term, we split $\myP{G_n^*} - \myP{\hat{h}}$ up into two components, based on
the randomness from resampling latent variables and the randomness from
resampling edges respectively. In other words,
\begin{equation}
\myP{G_n^*} - \myP{\hat{h}} = \myP{G_n^*} - \Expect{\myP{G_n^*}\middle| \epsilon^*,G_n} + \Expect{\myP{G_n^*}\middle| \epsilon^*,G_n} - \myP{\hat{h}}.
\end{equation}

Lemma \ref{lem:berry_esseen_estimated_graphon} establishes that
$\Expect{\myP{G_n^*}\middle| \epsilon^*,G_n} - \myP{\hat{h}}$ obeys, conditional on
$G_n$, a central limit theorem for U-statistics. Lemma \ref{lem:variance_residual_estimated_graphon} establishes that
$\myP{G_n^*} - \Expect{\myP{G_n^*}\middle| \epsilon^*,G_n}$, once appropriately
scaled, has asymptotically neglible contribution to the overall
randomness. Let 
\begin{equation}
\tau_R^2(\hat{h}) := \Var{\frac{\sqrt{n}}{\bar{\rho_n}^{|E(R)|}}\Expect{\myP{G_n^*}\middle| \epsilon^*,G_n} \middle| G_n}
\end{equation}

\begin{Lemma} \label{lem:berry_esseen_estimated_graphon} Let $R$ be a fixed
	motif. Then, if the conditions of Theorem \ref{thm:convergence_distribution_estimated_graphon} hold,
	\begin{equation}
	\sup_{x}{\left| \Prob{ \frac{\sqrt{n}}{\bar{\rho_n}^{|E(R)|}} (\Expect{\myP{G_n^*}\middle| \epsilon^*,G_n} - \myP{\hat{h}}) \le x \middle| G_n} - \Phi\left(\frac{x}{ \tau_R(\hat{h})}\right) \right|} \convprob 0.
	\end{equation}
\end{Lemma}

% Inelegant, but can't fit it all in one line!
% ALDEN SAYS: I think adding this notation is better than having such a messy line.

\begin{proof}
	We use the same notation as in the Proof of Theorem \ref{empirical graphon bootstrap}. We can write
	\begin{align}
	\Expect{\myP{G_n^*}\middle| \epsilon^*,G_n} & = \frac{1}{{n \choose p}} \sum_{\mathbf{i} \in S_n(p)} \frac{\Expect{\mathbb{I}(G_n^*(i) \simeq R)\middle| \epsilon_i^*,G_n}}{p! N(R) \bar{\rho_n}^{|E(R)|}} \\
	& = \frac{1}{{n \choose p}} \sum_{\mathbf{i} \in S_n(p)} \sum_{R_1 \sim R} \frac{\hat{H}_{R_1}(\epsilon_{i}^*)}{p! N(R) \bar{\rho_n}^{|E(R)|}},
	\end{align}
	which shows that $\Expect{\myP{G_n^*}\middle| \epsilon^*,G_n}$ is a U-statistic
	conditional on the graph $G_n$. 
	The Berry-Esseen theorem for U-statistics
	\citep{Janssen-CLT-for-von-Mises-statistics} therefore tells us
	\begin{equation}
	\label{eqn:berry_esseen_u_statistics}
	\begin{split}
	\sup_{x}{ \left| \mathbb{P} \left(\frac{\sqrt{n}}{\bar{\rho_n}^{|E(R)|}} (\Expect{\myP{G_n^*}\middle| \epsilon^*,G_n} - \myP{\hat{h}}) \le x \middle| G_n \right) - \Phi\left(\frac{x}{\tau_R(\hat{h})}\right) \right|}\\
	\le \bar{\nu_3} \bar{\sigma_g}^{-3} n^{-\frac{1}{2}}
	\end{split}
	\end{equation}
	where $\bar{\nu_3}$ and $\bar{\sigma_g}^{-3}$ are defined as in the proof of
	Theorem \ref{empirical graphon bootstrap}. First, we'll upper bound
	$\bar{\nu_3}$.
	\begin{align}
	\bar{\nu_3} & \leq 8 \left(\Expect{\left|\frac{\sum_{R_1 \sim R}\hat{H}_{R_1}(\epsilon_{1:p}^*)}{p! N(R) \bar{\rho_n}^{|E(R)|}}\right|^3 \middle| G_n} + \Expect{\frac{\sum_{R_1 \sim R} \hat{H}_{R_1}(\epsilon_{1:p}^*)}{p! N(R) \bar{\rho_n}^{|E(R)|}} \middle| G_n}^3 \right) \\
	& \leq \frac{16}{\bar{\rho_n}^{3|E(R)|}} \Expect{\hat{H}_R(\epsilon_{1:p}^*)^3 \middle| G_n}
	\end{align}
	where the second line follows from Holder's inequality. Then,
	\begin{align}
	\MoveEqLeft \Expect{\hat{H}_R(\epsilon_{1:p}^*)^3 \middle| G_n} \\
	\nonumber & =  \int_{[0,1]^p} \prod_{(i,j) \in R} \hat{h}(u_i,u_j)^3 \prod_{(i,j) \in K_n\setminus R} (1-\hat{h}(u_i,u_j))^3 du_{1:p} \\
	& \le \int_{[0,1]^p} \prod_{(i,j) \in R} \hat{h}(u_i,u_j)^3 du_{1:p} \\
	& \le \int_{[0,1]^2}{(\hat{h}(u_1,u_2)^{3 |E(R)|}) du_{1:2}}\\
	& =  O\left(\rho_n^{3|E(R)|}\right),
	\end{align}
	where the final equality follows by the boundedness in norm assumed in~\eqref{asmp:convergence_estimation_1}. This implies an upper bound on the skewness, 
	\begin{equation}
	\bar{\nu_3} = \frac{O_p\left(\rho_n^{3|E(R)|}\right)}{\bar{\rho_n}^{3|E(R)|}} = O_p(1).
	\end{equation}
	Turning to $\bar{\sigma}_g^2$, just as in Theorem \ref{empirical graphon
		bootstrap}, we have
	\begin{align}
	\bar{\sigma}_g^2 &= \Var{\Expect{\frac{\sum_{R_1 \sim R} \hat{H}_{R_1}(\epsilon_{1:p}^*)}{p! N(R) \bar{\rho_n}^{|E(R)|}}\middle| \epsilon_1^*, G_n} \middle| G_n} \\
	&= \frac{\sum_{S \in MC(R,2p - 1)}C_R(S) \left(\myP[S]{\hat{h}} - \myP{\hat{h}}^2 \right)}{\bar{\rho}_n^{2|E(R)|}}.
	\end{align}
	Observe that $S \in MC(R,2p - 1)$ satisfies $|E(S)| = 2|E(R)|$. Consequently, the assumptions $\|w\|_{|E(S)|} < \infty$ and $\rho_n\|h_n - \hat{h}\|_{|E(S)|} = o_p(\rho_n)$ of Lemma~\ref{lem:convergence_subgraph_density_estimated_graphon} hold for all $S \in MC(R,2p - 1)$ as well as for $S = R$, and so do the conclusions of Lemma~\ref{lem:convergence_subgraph_density_estimated_graphon}. From here, reasoning exactly the same as in the proof of Theorem 1, except using Lemma~\ref{lem:convergence_subgraph_density_estimated_graphon} rather than Lemma \ref{Convergence of subgraph density}, implies that $\bar{\sigma}_g^2 = \Omega_p(1)$. Combining this with our upper bound on $\bar{\nu_3}$, we conclude that
	\begin{equation}
	\bar{\nu_3} \bar{\sigma}_g^{-3} n^{-\frac{1}{2}} = O_p(1) O_p(1) n^{-\frac{1}{2}} \convprob 0,
	\end{equation}
	which concludes the proof of~Lemma~\ref{lem:berry_esseen_estimated_graphon}.
\end{proof}

\begin{Lemma} \label{lem:variance_residual_estimated_graphon} Let $R$ be a fixed
	motif. Then, if the conditions of Theorem \ref{thm:convergence_distribution_estimated_graphon} hold,
	\begin{equation}
	\label{eqn:variance_residual_estimated_graphon_1}
	\Var{\frac{\sqrt{n}}{p! N(R) \bar{\rho_n}^{|E(R)|}} \left(\myP{G_n^*} - \Expect{\myP{G_n^*}\middle| \epsilon^*,G_n} \right) \middle| G_n} \convprob 0,
	\end{equation}
	and
	\begin{equation}
	\label{eqn:variance_residual_estimated_graphon_2}
	\frac{\sqrt{n}}{p! N(R) \bar{\rho_n}^{|E(R)|}} \left(\myP{G_n^*} - \Expect{\myP{G_n^*}\middle| \epsilon^*,G_n}\right) \convprob 0.
	\end{equation}
\end{Lemma}
\begin{proof}
	We start by rewriting
	\begin{align}
	\MoveEqLeft    \frac{\sqrt{n}}{p! N(R) \bar{\rho_n}^{|E(R)|}} \left(\myP{G_n^*} - \Expect{\myP{G_n^*}\middle| \epsilon^*,G_n} \right)\\
	\nonumber & = \frac{\sqrt{n}}{p! N(R) {n \choose p}\bar{\rho_n}^{|E(R)|}} \sum_{\mathbf{i} \in S_n(p)} \sum_{R_1 \sim R} \{ \mathbb{I}(G_n^*(\mathbf{i}) = R_1) - \hat{H}_{R_1}(\epsilon_{\mathbf{i}}^*) \}.
	\end{align}
	By the definition
	of $\hat{H}_R$,
	\begin{equation}
	\Expect{\mathbb{I}(G_n^*(\mathbf{i}) = R) - \hat{H}_{R}(\epsilon_{\mathbf{i}}^*)\middle| \epsilon^*,G_n} = 0
	\end{equation}
	and so by the law of total variance,
	\begin{align}
	\MoveEqLeft \text{Cov}\left[\mathbb{I}(G_n^*(\iset) = R_1) -
	\hat{H}_{R_1}(\epsilon_{\iset}^*),\mathbb{I}(G_n^*(\jset) = R_2) -
	\hat{H}_{R_2}(\epsilon_{\jset}^*) \middle| G_n\right] \\
	\nonumber & =  \Expect{\text{Cov} \left[\mathbb{I}(G_n^*(\iset) = R_1) - \hat{H}_{R_1}(\epsilon_{\iset}^*),\mathbb{I}(G_n^*(\jset) = R_2) - \hat{H}_{R_2}(\epsilon_{\jset}^*)\middle| \epsilon^*,G_n \right\}}\\
	& =  \Expect{\text{Cov} \left[\mathbb{I}(G_n^*(\iset) = R_1),\mathbb{I}(G_n^*(\jset) = R_2)\middle| \epsilon^*,G_n \right]}
	\end{align}
	Therefore,
	\begin{align}
	\MoveEqLeft \Var{\frac{\sqrt{n}}{p! N(R) \bar{\rho_n}^{|E(R)|}} \left(\myP{G_n^*} -
		\Expect{\myP{G_n^*}\middle| \epsilon^*,G_n} \right) \middle| G_n} \label{expanded expression of variance}\\
	\nonumber    & =
	\frac{n}{(N(R) p! {n \choose p})^2
		\bar{\rho_n}^{2|E(R)|}} \\
	& \quad \cdot\sum_{\mathbf{i}, \mathbf{j} \in S_n(p)} \sum_{R_1,R_2 \sim R}
	\Expect{\text{Cov} \left[\mathbb{I}(G_n^*(\iset) = R_1),\mathbb{I}(G_n^*(\jset) = R_2)\middle| \epsilon^*,G_n \right]}
	\end{align}  
	Let us fix $\iset,\jset$ and denote by $k = |\iset \cap \jset|$ the number of nodes that $\iset$ and $\jset$ have in common, and $k' = |\iset \cup \jset| = 2p - k$. Note that if $k < 2$,
	then $G_n^*(\iset)$ and $G_n^*(\jset)$ share no dyads, and are thus independent
	once we condition on $\epsilon^*$. Otherwise if $k =2,\ldots,p$, we can bound the expected conditional covariance in terms of the moment $P_{W}(\hat{h})$ for some motif $W \in MC(R,k')$\footnote{In particular, the motif $W$ on nodes $1,\ldots,k'$ such that $W(\iset) = R_1$ and $W(\jset) = R_2$} as follows:
	\begin{align}
	\MoveEqLeft \biggl| \mathbb{E}\Bigl[\text{Cov}\bigl[\mathbb{I}(G_n^*(\iset) = R_1),
	\mathbb{I}(G_n^*(\jset) = R_2) \big| \epsilon^*,G_n\bigr] \Big|G_n\Bigr] \biggr| \\
	\nonumber & \le \mathbb{E}\Bigl[\mathbb{I}(G_n^*(\iset) = R_1),
	\mathbb{I}(G_n^*(\jset) = R_2) \Big|G_n\Bigr] \label{pf:variance_residual_estimated_graphon_1}\\
	& = \myP[W]{\hat{h}}\\
	& = O_p\left(\rho_n^{|E(W)|}\right) \label{pf:variance_residual_estimated_graphon_2},
	\end{align}
	where the last equality follows from Lemma \ref{lem:convergence_subgraph_density_estimated_graphon}, which we may invoke because~\eqref{asmp:convergence_estimation_1} implies $\|\hat{h} - h\|_{|E(W)|} = \leq \|\hat{h} - h\|_{2|E(R)|} = o_p(\rho_n)$ for all $k' = p,\ldots,2p - 1$ and $W \in MC(R,k')$.  
	
	From here, we divide our analysis into cases, based on whether (a) $R$ is acyclic and $\rho_n = \omega(n^{-1})$, or (b) $R$ is general and $\rho_n = \omega(n^{-2/p})$. Assuming (a), we have only the lower bound $|E(W)| \geq k' - 1$ for each $W \in MC(R,k')$, because $W$ must be connected. Fortunately, this lower bound is enough, since in this case $2|E(R)| = 2(p - 1)$. Noting that there will be on the order of $n^{k'}$ valid choices for $\iset$ and $\jset$ which yield two subgraphs with $k$ vertices in
	common, it follows from~\eqref{expanded expression of variance},~\eqref{pf:variance_residual_estimated_graphon_2}, and the facts we have just observed that
	\begin{align}
	& \Var{\frac{\sqrt{n}}{p! N(R) \bar{\rho_n}^{|E(R)|}} \left(\myP{G_n^*} -
		\Expect{\myP{G_n^*}\middle| \epsilon^*,G_n} \right) \middle| G_n} \\
	 & \quad = \sum_{k = 2}^{p} O(n^{1-k}) O_p\biggl(\rho_n^{(k' - 1) - 2|E(R)|}\biggr) \\
	 & \quad = \sum_{k = 2}^{p} O(n^{1-k}) O_p\biggl(\rho_n^{-(k - 1)}\biggr) \\
	 & \quad = o_p(1),
	\end{align}
	which implies~\eqref{eqn:variance_residual_estimated_graphon_1}. 
	
	Otherwise we assume (b), that $R$ is general and $\rho_n = \omega(n^{-2/p})$. By the properties of the merged copy set, if $W \in MC(R,k')$ then $|E(W)| \geq 2|E(R)| - {k \choose 2}$. It follows from~\eqref{expanded expression of variance} and~\eqref{pf:variance_residual_estimated_graphon_2} that
	\begin{align}
	& \Var{\frac{\sqrt{n}}{p! N(R) \bar{\rho_n}^{|E(R)|}} \left(\myP{G_n^*} -
		\Expect{\myP{G_n^*}\middle| \epsilon^*,G_n} \right) \middle| G_n} \\
	& \quad = \sum_{k = 2}^{p} O(n^{1-k}) O_p\biggl(\rho_n^{-{k \choose 2}}\biggr) \\
	& \quad = \sum_{k = 2}^{p} o(\rho_n^{(1-k)p/2}) O_p\biggl(\rho_n^{-{k \choose 2}}\biggr) \\
	& \quad = o_p(1),
	\end{align}
	again implying~\eqref{eqn:variance_residual_estimated_graphon_1}. 
	
	Equation~\eqref{eqn:variance_residual_estimated_graphon_2} then follows via an application of (conditional) Chebyshev's Inequality, which we make precise in Lemma~\ref{lem:conditional_chebyshev}, along with the fact that
	\begin{equation*}
	\mathbb{E}\Bigl[P_R(G_n^{\ast}) - \Expect{\myP{G_n^*}\middle| \epsilon^*,G_n} \Big| G_n\Bigr] = 0.
	\end{equation*}
	Thus the proof of Lemma~\ref{lem:variance_residual_estimated_graphon} follows upon proving Lemma~\ref{lem:conditional_chebyshev}.
\end{proof}

\begin{Lemma}
	\label{lem:conditional_chebyshev}
	Let $(X_n),(Y_n)$ be two sequences of random variables. \\
	Suppose $\mathbb{E}[X_n|Y_n] = 0$ and $\mathrm{Var}(X_n|Y_n) = o_p(1)$. Then $X_n = o_p(1)$.
\end{Lemma}
\begin{proof}
	It suffices to show that for any $a,\delta > 0$, $\mathbb{P}(X_n > a) < \delta$ for all $n$ sufficiently large. To begin with, we have that for any $b > 0$:
	\begin{align*}
	\mathbb{P}(X_n > a) & \leq \mathbb{P}\bigl(X_n > a|\mathrm{Var}(X_n|Y_n) \leq b\bigr) + \mathbb{P}(\mathrm{Var}(X_n|Y_n) > b).
	\end{align*} 
	By the law of iterated expectation, the conditional zero-mean property $\mathbb{E}[X_n|Y_n] = 0$ and Chebyshev's inequality,
	\begin{align*}
	\mathbb{P}\bigl(X_n > a|\mathrm{Var}(X_n|Y_n) \leq b\bigr) & = \mathbb{E}\Bigl[\mathbb{P}(X_n > a|Y_n) \Big| \mathrm{Var}(X_n|Y_n) \leq b \Bigr] \\
	& \leq \mathbb{E}\Bigl[\mathrm{Var}(X_n|Y_n)a^{-2} \Big| \mathrm{Var}(X_n|Y_n) \leq b \Bigr] \\
	& \leq \frac{b}{a^2};
	\end{align*}
	note that the second line follows because by assumption $\mathbb{E}[X_n|Y_n] = 0$. Taking $b = \delta a^2/2$, we have by assumption that $\mathbb{P}(\mathrm{Var}(X_n|Y_n) > b) \leq \delta/2$ for all $n$ sufficiently large, and so for all such $n$ we obtain that $\mathbb{P}(X_n > a) \leq \delta$, as desired.
\end{proof}

The proof of Theorem~\ref{thm:convergence_distribution_estimated_graphon} now follows straightforwardly. Putting Lemmas \ref{lem:berry_esseen_estimated_graphon} and \ref{lem:variance_residual_estimated_graphon} together via Slutsky's Theorem yields
\begin{equation}
\sup_{x}{ \left| \mathbb{P} \left(\frac{\sqrt{n}}{\bar{\rho_n}^{|E(R)|}} (\myP{G_n^*} - \myP{\hat{h}}) \le x \middle| G_n \right) - \Phi \left(\frac{x}{\tau_R(\hat{h})} \right) \right|} \convprob 0
\end{equation}
Finally, we have by the definition of conditional expectation that
\begin{equation}
\sigma_R^2(\hat{h}) = \tau_R^2(\hat{h}) + \Var{\frac{\sqrt{n}}{\bar{\rho_n}^{|E(R)|}}(\myP{G_n^*} - \Expect{\myP{G_n^* \middle| G_n}\middle| \epsilon^*,G_n})\middle| G_n}
\end{equation}

By Lemma \ref{lem:variance_residual_estimated_graphon}, we therefore have that
\begin{equation}
\left|\sigma_R^2(\hat{h}) - \tau_R^2(\hat{h})\right| \convprob 0
\end{equation}
Since $\sigma_R^2(\hat{h}) = \theta_p(1)$ and thus $\tau_R^2(\hat{h}) = \theta_p(1)$, this in turn implies
\begin{equation}
\left|\Phi \left(\frac{x}{ \tau_R(\hat{h})} \right) - \Phi \left(\frac{x}{\sigma_R(\hat{h})} \right)\right| \convprob 0
\end{equation}
which completes the proof of Theorem \ref{thm:convergence_distribution_estimated_graphon}.

\subsection{Proof of Lemma \ref{joint sum over motifs}}

This proof will be made slightly easier by introducing a second motif $S$, also
on $p$ nodes.  (We can think of $S$ as being an isomorphic copy of $R$.)  Let
$\mathbf{i} \cup \mathbf{j} = \mathbf{l}$, where (with slight notational
mangling) $\mathbf{l}$ is an ordered k-tuple. Then, define
$R_{\mathbf{i}} = \{(c,d): (a,b) \in R, (i_a,i_b) = (l_c,l_d) \}$ and similarly
$S_{\mathbf{j}} = \{(c,d): (a,b) \in S, (j_a,j_b) = (l_c,l_d) \}$. (Here, we've
done nothing more than taken the two motifs and sent them to the right
vertices as defined by the joint vertex set $\mathbf{l}$.)  Now, let
$W = R_{\mathbf{i}} \cup S_{\mathbf{j}}$.  We would like to relate
$ \mathbb{I}\{G_n(\mathbf{i}) \simeq R\}\mathbb{I}\{G_n(\mathbf{j}) \simeq S\}$
and $\mathbb{I}\{G_n(\mathbf{l}) \simeq W\}$. Unfortunately, they not not quite
equal. After all, if there are some edges {\em between} the vertices only in
$G_n(\mathbf{i})$ and those only in $G_n(\mathbf{j})$ the LHS can still be 1,
but the RHS will clearly be 0.  To fix this, we sum over all these possible
fuller motifs.  Let $ C_{V(R_{{\mathbf{i}}}),V(S_{\mathbf{j}})}$ be the set
of dyads between vertices only in $R_{{\mathbf{i}}}$ and those only in
$S_{{\mathbf{j}}}$. Then,
\begin{equation}
\mathbb{I}\{G_n(\mathbf{i}) = R\}\mathbb{I}\{G_n(\mathbf{j}) = S\} = \sum_{ \substack{W: W = R_{\mathbf{i}} \cup S_{\mathbf{j}} \cup Q \\ Q \subseteq C_{V(R_\mathbf{i}),V(S_\mathbf{j})}}} \mathbb{I} \{G_n(\mathbf{l}) = W \}
\end{equation}
and the relationship between seeing two motifs on different subsets of nodes
and seeing one merged motif on the union of the subsets is
established.\footnote{Notice that we have replaced equality up to isomorphism
	with strict equality. This will simplify the following algebra, and returning
	to the isomorphism relationship can be established with one line at the end.}

These manipulations allow us to write the double sum over $\mathbf{i}$ and
$\mathbf{j}$, with the product of indicators of seeing the motifs $R$ and $S$
on the induced subgraphs $G(\mathbf{i})$ and $G(\mathbf{j})$, as a sum over
$\mathbf{l}$, with the sum of indicators of seeing the motif $W$ on
$G(\mathbf{l})$.
\begin{equation} \label{expansion of double sum}
\begin{aligned}
& \sum_{\mathbf{i},\mathbf{j} \in S_p(n)} \mathbb{I}(G_n(\mathbf{i}) = R) \mathbb{I}(G_n(\mathbf{j}) = S) \\
& \quad = \sum_{k = p}^{2p} \sum_{\mathbf{l} \in S_k(n)} \sum_{\substack{\mathbf{i},\mathbf{j} \in S_p(n): \\ \mathbf{i} \cup \mathbf{j} = \mathbf{l}}} \sum_{\substack{W: W = R_{\mathbf{i}} \cup S_{\mathbf{j}} \cup Q \\ Q \subseteq C_{V(S_\mathbf{i}),V(S_\mathbf{j})}}} \mathbb{I}(G_n(\mathbf{l}) = W).
\end{aligned}
\end{equation}
(\ref{expansion of double sum}) makes clear how we can leverage the assumption
of exchangeability, since
$\Expect{\mathbb{I}(G_n(\mathbf{l}) = W)} = \myP[W]{h}$ remains unchanged
for all choices of $\mathbf{l}$, and so we can simplify (\ref{eqn: joint sum over motifs}) to
\begin{equation}
\sum_{k = p}^{2p} {n \choose k}k! \sum_{\substack{\mathbf{i},\mathbf{j} \in S_p(n): \\ \mathbf{i} \cup \mathbf{j} = 1:k}} \sum_{\substack{W: W = R_\mathbf{i} \cup S_\mathbf{j} \cup Q \\ Q \subseteq C_{V(R_\mathbf{i}),V(S_\mathbf{j})}}} \myP[W]{h}.
\end{equation}

Now, let us specify that $S \simeq R$. Then, for every choice of $\mathbf{i}$,
$\mathbf{j}$ and $Q$, by definition $R_\mathbf{i}, S_\mathbf{j} \simeq R$ and
so $W \in MC(R,k)$. Moreover, for a given $k$ the number of choices of
$\mathbf{i},\mathbf{j}$ and $Q$ are clearly fixed in $n$, and so the number of
times each $W$ in $MC(R,k)$ appears in the sum must also be fixed in $n$.
Finally, to return to isomorphism note that
\begin{equation}
\sum_{R_1, R_2 \simeq R} \mathbb{I}(G_n(\iset) = R_1)\mathbb{I}(G_n(\iset) = R_2) = \mathbb{I}(G_n(\iset) \simeq R)\mathbb{I}(G_n(\iset) \simeq R).
\end{equation}
and of course the number of $S \simeq R$, $N(R)$ is fixed in $n$ as well. Denote the
number of times each $W$ appears as $C_{W}(R)$, where
\begin{equation}
C_{W_0}(R_0) = \sum_{\substack{\iset, \jset \in 1:k, \\ \iset \cup \jset = 1:k}} \sum_{R, S \simeq R_0} \sum_{\substack{W: W = R_\mathbf{\iset} \cup S_\mathbf{\jset} \cup Q \\ Q \subseteq C_{V(R_\mathbf{i}),V(S_\mathbf{j})}}} \mathbb{I}(W = W_0),
\end{equation}
and the expression reduces to exactly the desired form.

\subsection{Proof of Lemma \ref{var_of_p}}

To get the expression for the first statement in the lemma, we expand the
square and use Lemma \ref{joint sum over motifs}.

\begin{align}
\Expect{\myP{G_n}^2} & = \frac{n}{\rho_n^{2|E(R)|} ({n \choose p} p! N(R))^2} \sum_{\mathbf{i},\mathbf{j} \in S_p(n)} \Expect{ \mathbb{I}(G_n(\mathbf{i}) \simeq R) \mathbb{I}(G_n(\mathbf{j}) \simeq R} \\
& = \frac{n}{\rho_n^{2|E(R)|} ({n \choose p} p! N(R))^2} \sum_{k = p}^{2p} {n \choose k} \sum_{W \in MC(R,k)} C_R(W) \myP[W]{h}
\end{align}
Subtracting $P_R(h)^2$ from this, and doing some basic algebraic rearrangement,
yields the desired result.

Now, we turn to the second statement in the lemma. Since the set $MC(R,k)$ is
finite for any given $R$ and $k$, and we are summing over a finite number of
$k$, the problem reduces to showing that
\begin{equation}
\lim{\frac{n {n\choose k} \myP[W]{h}}{(\rho_n^{|E(R)|} {n \choose p})^2}} < \infty
\end{equation}
for all $k$ in $p, \ldots , 2p-1$ and all $W$ in $MC(R,k)$. But then,
\begin{align}
\myP[W]{h} & \le \int_{[0,1]^k}{\prod_{(i,j) \in E(W)} h_n(u_i,u_j) du_{1:k}} \\
& \le \int_{[0,1]^2}{h_n(u_1,u_2)^{|E(W)|} du_{1:2}} \\
& = O\left(\rho_n^{|E(W)|}\right)
\end{align}
and so
\begin{equation}
\underset{n \to \infty}{\lim} \frac{n {n\choose k} \myP[W]{h}}{(\rho_n^{|R|} {n \choose p} p! N(R))^2} = O\left(n^{k+1-2p} \rho_n^{|E(W)| - 2|E(R)|}\right) \convprob 0
\end{equation}
where the last statement follows because either $R$ is acyclic (and thus
$|E(R)| = p-1$) or $\rho_n = O(1/p)$.

\subsection{Proof of Proposition \ref{convergences of myP}}

Both \eqref{convergence in dist of myP} and \eqref{convergence in prob of myP} come
from \citet{Bickel-Chen-Levina-method-of-moments}.

\subsection{Proof of Proposition \ref{convergence of variance}}
By Lemma~\ref{var_of_p}, $\sigma_R^2(h_n) \to \sigma_R^2$; thus to prove Proposition~\ref{convergence of variance} it suffices to show that for either the specific estimator $\hat{h} = \hat{h}_{adj}$, or for an arbitrary estimator $\hat{h}$ that satisfies~\eqref{asmp:convergence_estimation_1},
\begin{equation}
\label{pf:convergence_of_variance_1}
|\sigma_R^2(\hat{h}) - \sigma_R^2(h_n)| \convprob 0.
\end{equation}
Lemma~\ref{var_of_p} also gives an expression for the normalized variance $\sigma_R^2(h)$, when either $h = h_n$ is the true graphon, or $h = \hat{h}$ is a graphon estimate. Taking the difference between these two gives
\begin{align}
& \sigma^2_R(h_n) - \sigma^2_R(\hat{h}) =  \\
& \quad \frac{n}{({n \choose p} p! (N(R))^2} \sum_{k=p}^{2p-1}{ {n \choose k} \sum_{S \in MC(R,k)} C_R(S) \left[\frac{\myP[S]{h_n}}{\rho_n^{2|E(R)|}} - \frac{\myP[S]{\hat{h}}}{\bar{\rho_n}^{2|E(R)|}} \right]}\nonumber \\
\nonumber & \quad - \frac{1}{p!^2 N(R)^2}\left(1 - \frac{{n-p \choose p}}{{n \choose p}} \right)^2 \left[\frac{\myP{h_n}}{\rho_n^{|E(R)|}} - \frac{\myP{\hat{h}}}{\bar{\rho_n}^{|E(R)|}} \right].
\end{align}
The above expression makes clear that in order to establish~\eqref{pf:convergence_of_variance_1}, we want to suitably upper bound $|P_S(\hat{h}) - P_S(h_n)|$ for each motif $S \in MC(R,k), k = p,\ldots,2p - 1$, as well as for $S = R$.

We now collect the relevant upper bounds on~$|P_S(\hat{h}) - P_S(h_n)|$, under various assumptions and for each of $\hat{h} = \hat{h}_{adj}$ or $\hat{h}$ an arbitrary estimator satisfying~\eqref{asmp:convergence_estimation_1}. We then show that these upper bounds imply~\eqref{pf:convergence_of_variance_1}.

\paragraph{Rates of convergence for motif densities.}
We summarize the results of Lemmas~\ref{Convergence of subgraph density} and~\ref{lem:convergence_subgraph_density_estimated_graphon}, insofar as they apply to the proof of Proposition~\ref{convergence of variance}
\begin{itemize}
	\item Suppose $R$ is a general motif, and $S \in MC(R,k)$ for some $k = p,\ldots,2p - 1$. Then by assumption $\|w\|_{|E(S)|} \leq \|w\|_{|2E(R)|} < \infty$, and either (a) $\hat{h} = \hat{h}_{adj}$ and $\rho_n = \omega(n^{-1/2p})$, or (b) $\hat{h} = \hat{h}$ and $\|\hat{h} - h_n\|_{|E(S)|} \leq \|\hat{h} - h_n\|_{2|E(R)|} = o_p(\rho_n)$. Either way, 
	\begin{equation}
	\label{pf:convergence_of_variance_1.25}
	\bigl|P_S(\hat{h}) - P_S(h_n)\bigr| = o_p(\rho_n^{|E(S)|}).
	\end{equation}
	\item Suppose $R$ is an acyclic motif, and $S \in MC(R,k)$ for some $k = p,\ldots,2p - 1$. Then by assumption $\|w\|_{|E(S)|} < \|w\|_{|2E(R)|} < \infty$, and either (a) $\hat{h} = \hat{h}_{adj}$ and $\rho_n = \omega(n^{-1})$, or (b) $\hat{h} = \hat{h}$ and $\|\hat{h} - h_n\|_{|E(S)|} \leq \|\hat{h} - h_n\|_{2|E(R)|} = o_p(\rho_n)$. Either way,\footnote{Of course, Lemma~\ref{lem:convergence_subgraph_density_estimated_graphon} implies something stronger than~\eqref{pf:convergence_of_variance_1.5} in case (b), but we will not need this stronger result.} for any connected $k$-node motif $S$,
	\begin{equation}
	\label{pf:convergence_of_variance_1.5}
	\bigl|P_S(\hat{h}) - P_S(h_n)\bigr| = o_p(\rho_n^{k - 1}).
	\end{equation}
\end{itemize}
Note that the above pair of statements hold because the assumptions of Proposition~\ref{convergence of variance} subsume those of Lemma~\ref{Convergence of subgraph density} and Lemma~\ref{lem:convergence_subgraph_density_estimated_graphon}, respectively. Now we turn to establishing~\eqref{pf:convergence_of_variance_1}; we will show the results for $\hat{h} = \hat{h}_{adj}$ and $\hat{h}$ an arbitrary estimator satisfying~\eqref{asmp:convergence_estimation_1} at the same time, since the proof relies only on the above properties.

\paragraph{Proof of~\eqref{pf:convergence_of_variance_1}.}
To begin, we note that $\bar{\rho}_n = \hat{P}_{S}(\hat{h})$ for $S = K_2$, and so we know that 
\begin{equation}
\label{pf:convergence_of_variance_2}
|\bar{\rho}_n - \rho_n| = o_p(\rho_n).
\end{equation}
Next, we observe that 
\begin{equation}
\label{pf:convergence_of_variance_3}
\left(1 - \frac{{n-p \choose p}}{{n \choose p}} \right)^2\bigl|P_R(\hat{h}) - P_R(h_n)\bigr| = o\Bigl(\bigl|P_R(\hat{h}) - P_R(h_n)\bigr|\Bigr) = o_p\bigl(\rho_n^{|E(R)|}\bigr);
\end{equation}
the last equality follows immediately from~\eqref{pf:convergence_of_variance_1.25} when $R$ is general, and from~\eqref{pf:convergence_of_variance_1.5} in the special case where $R$ is acyclic and $|E(R)| = p - 1$.

It remains to show that for all $S \in MC(R,k)$ and $k = p,\ldots,2p - 1$,
\begin{equation} 
\label{pf:convergence_of_variance_4}
\frac{n {n \choose k} }{{n \choose p}^2 \rho_n^{2|E(R)|}} \left[\myP[S]{h_n} - \myP[S]{\hat{h}} \right] \convprob 0.
\end{equation}
We now separate our analysis into the case where $R$ is a general motif, and the special case of $R$ acyclic. 

\underline{\emph{General motif}.} 
When $R$ is general, we can use \eqref{pf:convergence_of_variance_1.25} to reduce the left hand side of \eqref{pf:convergence_of_variance_4} to
\begin{equation} 
\label{pf:convergence_of_variance_5}
\frac{n {n \choose k} }{{n \choose p}^2 \rho_n^{2|E(R)| - |E(S)|}} \frac{\left[\myP[S]{h_n} - \myP[S]{\hat{h}} \right]}{\rho_n^{|E(S)|}} = O(n^{k + 1 - 2p}) o_p(1) \rho_n^{|E(S)| - 2|E(R)|}.
\end{equation} 
To upper bound $2|E(R)| - |E(S)|$, we leverage the fact
that $S$ is a member of the merged copy set of $R$. The key is to notice
that edges which are lost in $S$ when vertices are merged can only be edges
between two merged vertices. There are $2p-k$ such merged vertices, so there
must be at least $2|E(R)| - {2p-k \choose 2}$ edges in $S$, and
$2|E(R)| - |E(S)| \le \frac{(2p - k)(2p-k-1)}{2} =: \frac{k'(k' - 1)}{2}$ for $k' = 2p - k$. Then plugging back into~\eqref{pf:convergence_of_variance_5} gives
\begin{align}
\frac{n {n \choose k} }{{n \choose p}^2 \rho_n^{2|E(R)|}} \left[\myP[S]{h_n} - \myP[S]{\hat{h}} \right] & =
O(n^{1 - k'}) \rho_n^{-k'(k' - 1)/2} o_p(1) \\
& = O(n^{1 - k'}) O(n^{k' - 1}) o_p(1) = o_p(1),
\end{align}
with the penultimate equality following because $\rho_n = \Omega(n^{-2/p}) = \Omega(n^{-2/k'})$.

\underline{\emph{Acyclic motif}.} 
Otherwise if $R$ is acyclic, then $|E(R)| = p - 1$, and we use~\eqref{pf:convergence_of_variance_1.5} to deduce that 
\begin{align}
\frac{n {n \choose k} }{{n \choose p}^2 \rho_n^{2|E(R)|}} \left[\myP[S]{h_n} - \myP[S]{\hat{h}} \right] & = \frac{n {n \choose k} }{{n \choose p}^2 \rho_n^{2p - k - 1}} \biggl(\frac{\myP[S]{h_n} - \myP[S]{\hat{h}}}{\rho_n^{k - 1}}\biggr)\\
& = \frac{n {n \choose k} }{{n \choose p}^2 \rho_n^{2p - k - 1}} o_p(1) \\ & = O\left((n\rho_n)^{-(2p-k-1)}\right) o_p(1) \\
& = O(1) o_p(1),
\end{align}
with the last line following because $\rho_n = \Omega(n^{-1})$.

Thus we have established~\eqref{pf:convergence_of_variance_4}, which concludes the proof of Proposition~\ref{convergence of variance}.

\subsection{Proof of Lemma \ref{Convergence of subgraph density}}

Throughout the proof of this lemma, $\hat{h} = \hat{h}_{adj}$. First, we will
show that when $\rho_n = \omega(n^{-\frac{1}{k}})$, where $|V(S)| = k$, then
\begin{equation} \label{difference of expected densities}
\left|\frac{\myP[S]{\hat{h}} - \myP[S]{h}}{\rho_n^{|E(S)|}}\right| \convprob 0
\end{equation}

We begin by bounding (\ref{difference of expected densities}) by
\begin{equation} \label{two terms difference of expected densities}
\left|\frac{\myP[S]{\hat{h}} - \myP[S]{h}}{\rho_n^{|E(S)|}}\right| \le \left|\frac{\myP[S]{\hat{h}} - \myP[S]{G_n}}{\rho_n^{|E(S)|}}\right| + \left|\frac{\myP[S]{G_n} - \myP[S]{h}}{\rho_n^{|E(S)|}}\right|
\end{equation}

The second of these is $o_p(1)$ by Proposition \ref{convergences of myP}, along with the assumption that $\int_{[0,1]^2}w(u,v)^{2|E(S)|}$ is finite.
To bound the first term, we will make use of the
following combinatorial identity, which relates the $V$-statistic $P_S(\hat{h})$ to the $U$-statistic $P_S(G_n)$ and can be verified by standard counting arguments:
\begin{equation} \label{combinatorial identity for P-hat}
\left|\frac{\myP[S]{\hat{h}} - \myP[S]{G_n}}{\rho_n^{|E(S)|}}\right| = \left[1-\frac{{n \choose k} }{n^k}\right]\frac{\myP[S]{G_n}}{\rho_n^{|E(S)|}} + O\biggl(\sum_{j=1}^{k-1} n^{j-k} \sum_{W \in M(S,j)} \frac{\myP[W]{G_n}}{\rho_n^{|E(S)|}}\biggr).
\end{equation}
Here $M(S,j)$ is the set of motifs $W$ on $j$ vertices which can be formed by
merging vertices in $S$, and is not the same as the merged copy set $MC(S,j)$. 

$W$ being formed by merging vertices in $S$ restricts how many fewer edges it
may have than $S$. The first merger of two vertices can have merged at most
$k-1$ edges, the second merger can have merged at most $k-2$ edges, and so
forth. As a result, if $W \in M(S,j)$,
\begin{equation}
|E(S)| - |E(W)| \leq \frac{(k+j-1)(k-j)}{2}.
\end{equation}

We will also use the fact that $\frac{\myP[W]{G_n}}{\rho_n^{|E(W)|}} = O_p(1)$,
again by Proposition \ref{convergences of myP} along with the fact that $W$ has fewer edges than $S$. Putting these two together, we have
\begin{align}
O\left(n^{j-k}\right) \frac{\myP[W]{G_n}}{\rho_n^{|E(S)|}} & = O_p\left(\frac{n^{j-k}}{\rho_n^{\frac{(k+j-1)(k-j)}{2}}}\right) \\
& = o_p\left(n^{j-k + \frac{(k+j-1)(k-j)}{2k}}\right) \\
& = o_p\left(n^{-\frac{k-j}{2k} }\right) = o_p(1)
\end{align}
since $j < k$. We must now deal with the leading term, but this is merely
\begin{equation}
\left[1 - \frac{{n \choose k} k!}{n^k}\right]\frac{\myP[S]{G_n}}{\rho_n^{|E(S)|}} = O\left(\frac{1}{n}\right)O_p(1)
\end{equation}
and so we have shown (\ref{difference of expected densities}) in the case where
$\rho_n = \omega(n^{-1/k})$. 

Now we turn to the setting where the only restriction on $\rho_n$ is that
$n\rho_n \overset{n}{\to} \infty$. What we must show is that
\begin{equation} \label{general bound on order of myP}
\left|\frac{\myP[S]{h}  - \myP[S]{\hat{h}}}{\rho_n^{k-1}}\right| \convprob 0.
\end{equation}
Similar to \ref{combinatorial identity for P-hat}), we have
\begin{equation}
\left|\frac{\myP[S]{\hat{h}} - \myP[S]{G_n}}{\rho_n^{|k-1|}}\right| = \left[1-\frac{{n \choose k} }{n^k}\right]\frac{\myP[S]{G_n}}{\rho_n^{|k-1|}} + O\biggl(\sum_{j=1}^{k-1} n^{j-k} \sum_{W \in M(S,j)} \frac{\myP[W]{G_n}}{\rho_n^{|k-1|}}\biggr)
\end{equation}

By Proposition \ref{convergences of myP}, we have that for all $j$ and all
$W \in MC(S,j)$, $\frac{\myP[W]{G_n}}{\rho_n^{|E(W)|}} = O_p(1)$. Since $|E(W)| \ge j - 1$, this implies
$\frac{\myP[W]{G_n}}{\rho_n^{|k-1|}} = O_p(\rho_n^{j-k})$. Therefore, along with the fact $n \rho_n \to \infty$, it follows that
\begin{equation}
O\left(n^{j-k}\right) \sum_{W \in M(S,j)} \frac{\myP[W]{G_n}}{\rho_n^{|k-1|}} = o_p(1).
\end{equation}

Finally,
$\left[1-\frac{{n \choose k} }{n^k}\right]\frac{\myP[S]{G_n}}{\rho_n^{|k-1|}} = o_p(1)$, since $\frac{\myP[S]{G_n}}{\rho_n^{|E(S)|}} = O_p(1)$ and $|E(S)| \ge k - 1$. Thus, we have shown (\ref{general bound
	on order of myP}).

\subsection{Proof of Lemma~\ref{lem:convergence_subgraph_density_estimated_graphon}}
Throughout the proof of this Lemma, $\hat{h}$ is a graphon estimate which satisfies~\eqref{asmp:convergence_subgraph_density_estimated_graphon} but is otherwise arbitrary, and $q = |V(S)|$. We can rewrite $\myP[S]{\hat{h}} - \myP[S]{h_n}$ using their definitions
to yield
\begin{align} \label{definition of myPhat - myP}
\Biggl| & \int_{[0,1]^q} \biggr\{ \prod_{(i,j) \in E(S)} \hat{h}(u_i,u_j) \prod_{(i,j) \not\in E(S)} (1-\hat{h}(u_i,u_j)) \\
& \quad - \prod_{(i,j) \in E(S)} h_n(u_i,u_j) \prod_{(i,j) \not\in E(S)} (1-h_{n}(u_i,u_j)) du_{1:q} \biggr\} \Biggr|.
\end{align}

We now have the difference of products, but want the product of differences. We would also like to ignore the contribution of non-edges. So we use the following Lemma.
\begin{Lemma} \label{lem:bound_of_products} 
	Let $a_1,\ldots,a_\ell, b_1,\ldots,b_\ell \in [0,1]$ for a positive integer $\ell$. Then, for any $k \in 1:\ell$,
	\begin{equation}
	\label{eqn:bound_of_products}
	\begin{aligned}
	& \biggl|\prod_{i = 1}^{k}a_i\prod_{i = k + 1}^{\ell}(1 - a_i) - \prod_{i = 1}^{k}b_i\prod_{i = k + 1}^{\ell}(1 - b_i)\biggr| \\
	& \quad \leq \prod_{i = 1}^{k - 1}a_i \cdot \sum_{j = k + 1}^{\ell}|a_j - b_j| + \sum_{j = 1}^{k}\biggl\{\prod_{i=1}^{j - 1} a_i \cdot |a_j - b_j| \cdot \prod_{i = j + 1}^{k} |b_i| \biggr\}.
	\end{aligned}
	\end{equation}
\end{Lemma}
In~\eqref{eqn:bound_of_products} we have adopted the convention that products which run over empty index sets are one---i.e. $\prod_{i = 2}^{1} a_i = \prod_{i = k + 1}^{k}b_i = 1$--- and sums which run over empty index sets are zero---i.e. $\sum_{j = \ell + 1}^{\ell} |a_j - b_j| = 0$---for notational conciseness. 

\begin{proof} \emph{Lemma~\ref{lem:bound_of_products}}
	To begin, suppose $k = \ell$, so that our goal is to show
	\begin{equation}
	\label{pf:bound_of_products_1}
	\biggl|\prod_{i = 1}^{k}a_i - \prod_{i = 1}^{k}b_i\biggr| \leq \sum_{j = 1}^{k}\biggl\{\prod_{i=1}^{j - 1} a_i \cdot |a_j - b_j| \cdot \prod_{i = j + 1}^{k} |b_i| \biggr\}.
	\end{equation}
	When $k  = \ell = 1$, the claim is obvious. For a general $k = \ell$ it follows by induction; letting $A = \prod_{i = 1}^{k - 1}a_i$ and $B = \prod_{i = 1}^{k - 1}b_i$, we have
	\begin{align*}
	|Aa_k - Bb_k| & \leq A|a_k - b_k| + b_k|A - B| \\
	& \leq A|a_k - b_k| + b_k \sum_{j = 1}^{k - 1}\biggl\{\prod_{i=1}^{j - 1} a_i \cdot |a_j - b_j| \cdot \prod_{i = j + 1}^{k - 1} |b_i| \biggr\} \\
	& = \sum_{j = 1}^{k}\biggl\{\prod_{i=1}^{j - 1} a_i \cdot |a_j - b_j| \cdot \prod_{i = j + 1}^{k} |b_i| \biggr\}.
	\end{align*}
	
	Now if $k < \ell$, then we have
	\begin{align*}
	& \Biggl|\prod_{i = 1}^{k}a_i\prod_{i = k + 1}^{\ell}(1 - a_i) - \prod_{i = 1}^{k}b_i\prod_{i = k + 1}^{\ell}(1 - b_i)\Biggr| \\
	& \quad \leq \prod_{i = 1}^{k}a_i\Biggl|\prod_{i = k + 1}^{\ell}(1 - a_i) - \prod_{i = k + 1}^{\ell}(1 - b_i)\Biggr| + \prod_{i = k + 1}^{\ell}(1 - b_i) \Biggl|\prod_{i = 1}^{k}a_i - \prod_{i = 1}^{k}b_i\Biggr| \\
	& \quad \leq \prod_{i = 1}^{k}a_i \sum_{j = k + 1}^{\ell} |a_j - b_j| + \Biggl|\prod_{i = 1}^{k}a_i - \prod_{i = 1}^{k}b_i\Biggr|,
	\end{align*}
	with the second inequality following because $1 - b_i \in [0,1]$, along with the sum-product inequality which holds for numbers in $[0,1]$. Then the claim of the Lemma follows from~\eqref{pf:bound_of_products_1}.
\end{proof}

For notational convenience, let us adopt an arbitrary ordering $e_1,\ldots,e_{|E(S)|}$ of the edges in $E(S)$, and also write $\hat{\Delta}(u,v) := |\hat{h}(u,v) - h_n(u,v)|$. From~\eqref{definition of myPhat - myP} and Lemma~\ref{lem:bound_of_products}, we can upper bound $|P_S(\hat{h}) - P_S(h_n)|$ by the sum of two terms,
\begin{align*}
& |P_S(\hat{h}) - P_S(h_n)| \\
& \quad \leq \sum_{(i',j') \not\in E(S)} \int_{[0,1]^q} \prod_{i = 1}^{|E(S)| - 1} h_n(u_{e_i(1)}, u_{e_i(2)}) \hat{\Delta}(u_{i'},u_{j'}) \,du_{1:q} \\
& \quad \quad + \sum_{j = 1}^{|E(S)|} \int_{[0,1]^q} \prod_{i = 1}^{j - 1}h_n(u_{e_i(1)}, u_{e_i(2)}) \hat{\Delta}(u_{e_i(1)}, u_{e_i(2)}) \prod_{i = j + 1}^{k}\hat{h}(u_{e_i(1)}, u_{e_i(2)}) \\
& \quad =: T_1 + T_2.
\end{align*}
Thus it remains to show that $T_1,T_2 = o_p(\rho_n^{|E(S)|})$. This is accomplished in a similar manner for each term, by using H\"{o}lder's inequality, the assumption $\|w\|_{|E(S)|} < \infty$, and the rate of convergence assumed in~\eqref{asmp:convergence_subgraph_density_estimated_graphon}. For $T_1$, we have
\begin{align*}
T_1 & = \sum_{(i',j') \not\in E(S)} \int_{[0,1]^q} \prod_{i = 1}^{|E(S)| - 1} h_n(u_{e_i(1)}, u_{e_i(2)}) \hat{\Delta}(u_{i'},u_{j'}) \,du_{1:q} \\
& \leq \sum_{(i',j')} \|h_n\|_{|E(S)|}^{|E(S)| - 1} \|\hat{\Delta}\|_{|E(S)|} \\
& = o_p(\rho_n^{|E(S)|}),
\end{align*}
and for $T_2$,
\begin{align*}
T_2 & = \sum_{j = 1}^{|E(S)|} \int_{[0,1]^q} \prod_{i = 1}^{j - 1}h_n(u_{e_i(1)}, u_{e_i(2)}) \hat{\Delta}(u_{e_i(1)}, u_{e_i(2)}) \prod_{i = j + 1}^{k}\hat{h}(u_{e_i(1)}, u_{e_i(2)})   \\
& \leq \sum_{j = 1}^{|E(S)|} \|h_n\|_{|E(S)|}^{j - 1} \|\hat{\Delta}\|_{|E(S)|} \|\hat{h}\|_{|E(S)| - j} \\
& = o_p(\rho_n^{|E(S)|}).
\end{align*}
This concludes the proof of Lemma~\ref{lem:convergence_subgraph_density_estimated_graphon}.

% Acknowledgements
\subsection*{Acknowledgments}
	We are grateful to the participants in the CMU Networkshop for valuable
	suggestions on the content and presentation of our results; to Prof.\ Paul
	Janssen for directing us to \citet{Janssen-CLT-for-von-Mises-statistics}; and
	to Profs.\ Carl T. Bergstrom, Peter J. Bickel, Dean Eckles, Jennifer Neville,
	Art B. Owen and Alessandro Rinaldo for valuable discussions, over the years,
	about network bootstraps. Our work was supported by NSF grant DMS1418124.

% Bibliography


\begin{thebibliography}{38}
\providecommand{\natexlab}[1]{#1}
\providecommand{\url}[1]{\texttt{#1}}
\expandafter\ifx\csname urlstyle\endcsname\relax
  \providecommand{\doi}[1]{doi: #1}\else
  \providecommand{\doi}{doi: \begingroup \urlstyle{rm}\Url}\fi

\bibitem[Bearman et~al.(2004)Bearman, Moody, and Stovel]{chains-of-affection}
Peter~S. Bearman, James Moody, and Katherine Stovel.
\newblock Chains of affection: The structure of adolescent romantic and sexual
  networks.
\newblock \emph{American Journal of Sociology}, 110:\penalty0 44--91, 2004.
\newblock URL \url{http://www.soc.duke.edu/~jmoody77/chains.pdf}.

\bibitem[Bhattacharyya and
  Bickel(2015)]{Bhattacharyya-subsample-count-features}
Sharmodeep Bhattacharyya and Peter~J. Bickel.
\newblock Subsampling bootstrap of count features of networks.
\newblock \emph{Annals of Statistics}, 43:\penalty0 2384--2411, 2015.
\newblock \doi{10.1214/15-AOS1338}.
\newblock URL \url{https://arxiv.org/abs/1312.2645}.

\bibitem[Bickel et~al.(1997)Bickel, G{\"o}tze, and van
  Zwet]{Bickel-resampling-bootstrap}
Peter~J Bickel, Friedrich G{\"o}tze, and Willem~R van Zwet.
\newblock Resampling fewer than n observations: gains, losses, and remedies for
  losses.
\newblock \emph{Statistica Sinica}, pages 1--31, 1997.
\newblock URL \url{http://www.jstor.org/stable/26432490}.

\bibitem[Bickel et~al.(2011)Bickel, Chen, and
  Levina]{Bickel-Chen-Levina-method-of-moments}
Peter~J. Bickel, Aiyou Chen, and Elizaveta Levina.
\newblock The method of moments and degree distributions for network models.
\newblock \emph{Annals of Statistics}, 39:\penalty0 38--59, 2011.
\newblock \doi{10.1214/11-AOS904}.
\newblock URL \url{http://arxiv.org/abs/1202.5101}.

\bibitem[Borgs et~al.(2008)Borgs, Chayes, Lov{\'a}sz, S{\'o}s, and
  Vesztergombi]{Borgs-Chayes-Lovasz-et-al-convergent-graph-sequences-i}
Christian Borgs, Jennifer~T. Chayes, L{\'a}szl{\'o} Lov{\'a}sz, Vera~T.
  S{\'o}s, and Katalin Vesztergombi.
\newblock Convergent sequences of dense graphs {I}: Subgraph frequencies,
  metric properties and testing.
\newblock \emph{Advances in Mathematics}, 219:\penalty0 1801--1851, 2008.
\newblock \doi{10.1016/j.aim.2008.07.008}.
\newblock URL
  \url{http://research.microsoft.com/en-us/um/people/jchayes/Papers/ConvMetric.pdf}.

\bibitem[Borgs et~al.(2019)Borgs, Chayes, Cohn, and
  Zhao]{Borgs-Chayes-Zhao-sparse-graph-convergence}
Christian Borgs, Jennifer~T. Chayes, Henry Cohn, and Yufei Zhao.
\newblock An $l^p$ theory of sparse graph convergence {I}: Limits, sparse
  random graph models, and power law distributions.
\newblock \emph{Transactions of the American Mathematical Society},
  372:\penalty0 3019--3062, 2019.
\newblock \doi{10.1090/tran/7543}.
\newblock URL \url{http://arxiv.org/abs/1401.2906}.

\bibitem[Callaert and Janssen(1978)]{Callaert-berry-esseen-u-statistics}
Herman Callaert and Paul Janssen.
\newblock The {Berry}-{Esseen} theorem for ${U}$ statistics.
\newblock \emph{Annals of Statistics}, 6:\penalty0 417--421, 1978.
\newblock \doi{10.1214/aos/1176344132}.

\bibitem[Dynkin(1978)]{Dynkin-suff-stats-and-extreme-points}
E.~B. Dynkin.
\newblock Sufficient statistics and extreme points.
\newblock \emph{Annals of Probability}, 6:\penalty0 705--730, 1978.
\newblock \doi{10.1214/aop/1176995424}.
\newblock URL \url{http://projecteuclid.org/euclid.aop/1176995424}.

\bibitem[Eden et~al.(2017)Eden, Levi, Ron, and
  Seshadhri]{Eden-triangle-counting}
Talya Eden, Amit Levi, Dana Ron, and C.~Seshadhri.
\newblock Approximately counting triangles in sublinear time.
\newblock \emph{{SIAM} Journal on Computing}, 46:\penalty0 1603--1646, 2017.
\newblock \doi{10.1137/15M1054389}.
\newblock URL \url{https://arxiv.org/abs/1504.00954}.

\bibitem[Eldardiry and Neville(2008)]{Eldardiry-Neville-network-resampling}
Hoda Eldardiry and Jennifer Neville.
\newblock A resampling technique for relational data graphs.
\newblock In \emph{Proceedings of the 2nd SNA Workshop, 14th ACM SIGKDD
  Conference on Knowledge Discovery and Data Mining}, New York, 2008.
  Association for Computing Machinery.
\newblock URL
  \url{http://www.cs.purdue.edu/homes/neville/papers/eldardiry-neville-snakdd2008.pdf}.

\bibitem[Fienberg and
  Wasserman(1981)]{Fienberg-Wasserman-sociometric-relations}
Stephen~E. Fienberg and Stanley Wasserman.
\newblock Categorical data analysis of single sociometric relations.
\newblock In Samuel Leinhardt, editor, \emph{Sociological Methodology 1981},
  pages 156--192. Jossey-Bass, San Francisco, 1981.
\newblock \doi{10.2307/270741}.
\newblock URL \url{http://www.jstor.org/stable/270741}.

\bibitem[Fienberg et~al.(1985)Fienberg, Meyer, and
  Wasserman]{Fienberg-Meyer-Wasserman-sociometric-relations}
Stephen~E. Fienberg, Michael~M. Meyer, and Stanley~S. Wasserman.
\newblock Statistical analysis of multiple sociometric relations.
\newblock \emph{Journal of the American Statistical Association}, 80:\penalty0
  51--67, 1985.
\newblock \doi{10.1080/01621459.1985.10477129}.

\bibitem[Gao et~al.(2015)Gao, Lu, and Zhou]{Gao-graphon-estimation}
Chao Gao, Yu~Lu, and Harrison~H. Zhou.
\newblock Rate-optimal graphon estimation.
\newblock \emph{Annals of Statistics}, 43:\penalty0 2624--2652, 2015.
\newblock \doi{10.1214/15-AOS1354}.
\newblock URL \url{https://arxiv.org/abs/1410.5837}.

\bibitem[Gao et~al.(2016)Gao, Lu, Ma, and Zhou]{Gao-optimal-completion}
Chao Gao, Yu~Lu, Zongming Ma, and Harrison~H. Zhou.
\newblock Optimal estimation and completion of matrices with biclustering
  structures.
\newblock \emph{Journal of Machine Learning Research}, 17\penalty0
  (161):\penalty0 1--29, 2016.
\newblock URL \url{http://jmlr.org/papers/v17/15-617.html}.

\bibitem[Geyer(2013)]{Geyer-on-Le-Cam}
Charles~J. Geyer.
\newblock Asymptotics of maximum likelihood without the {LLN} or {CLT} or
  sample size going to infinity.
\newblock In Galin Jones and Xiaotong Shen, editors, \emph{Advances in Modern
  Statistical Theory and Applications: A Festschrift in honor of {Morris} {L}.
  {Eaton}}, pages 1--24, Beachwood, Ohio, 2013. Institute of Mathematical
  Statistics.
\newblock \doi{10.1214/12-IMSCOLL1001}.
\newblock URL \url{http://arxiv.org/abs/1206.4762}.

\bibitem[Gonen et~al.(2011)Gonen, Ron, and Shavitt]{Gonen-star-counting}
Mira Gonen, Dana Ron, and Yuval Shavitt.
\newblock Counting stars and other small subgraphs in sublinear-time.
\newblock \emph{{SIAM} Journal on Discrete Mathematics}, 25:\penalty0
  1365--1411, 2011.
\newblock \doi{10.1137/100783066}.
\newblock URL \url{http://www.eng.tau.ac.il/~shavitt/pub/SIDMA-stars.pdf}.

\bibitem[Holland et~al.(1983)Holland, Laskey, and
  Leinhardt]{Holland-Lasky-Leinhardt-stochastic-blockmodels}
Paul~W. Holland, Kathryn~Blackmond Laskey, and Samuel Leinhardt.
\newblock Stochastic blockmodels: First steps.
\newblock \emph{Social Networks}, 5:\penalty0 109--137, 1983.
\newblock \doi{10.1016/0378-8733(83)90021-7}.

\bibitem[Janssen(1981)]{Janssen-CLT-for-von-Mises-statistics}
Paul Janssen.
\newblock Rate of convergence in the central limit theorem and in the strong
  law of large numbers for von {Mises} statistics.
\newblock \emph{Metrika}, 28:\penalty0 35--46, 1981.
\newblock \doi{10.1007/BF01902875}.

\bibitem[Kallenberg(2005)]{Kallenberg-symmetries}
Olav Kallenberg.
\newblock \emph{Probabilistic Symmetries and Invariance Principles}.
\newblock Springer-Verlag, New York, 2005.

\bibitem[Klopp et~al.(2017)Klopp, Tsybakov, and
  Verzelen]{Klopp-oracle-inequalities-for-graphons}
Olga Klopp, Alexandre~B. Tsybakov, and Nicolas Verzelen.
\newblock Oracle inequalities for network models and sparse graphon estimation.
\newblock \emph{The Annals of Statistics}, 45:\penalty0 316--354, 2017.
\newblock \doi{10.1214/16-AOS1454}.
\newblock URL \url{https://arxiv.org/abs/1507.04118}.

\bibitem[Kolaczyk(2009)]{Kolaczyk-on-network-data}
Eric~D. Kolaczyk.
\newblock \emph{Statistical Analysis of Network Data}.
\newblock Springer-Verlag, New York, 2009.

\bibitem[Lahiri(2003)]{Lahiri-resampling-for-dependent}
S.~N. Lahiri.
\newblock \emph{Resampling Methods for Dependent Data}.
\newblock Springer-Verlag, New York, 2003.

\bibitem[Lauritzen(1984)]{Lauritzen-extreme-point-models}
Steffen~L. Lauritzen.
\newblock Extreme point models in statistics.
\newblock \emph{Scandinavian Journal of Statistics}, 11:\penalty0 65--91, 1984.
\newblock URL \url{http://www.jstor.org/pss/4615945}.
\newblock With discussion and response.

\bibitem[Leger(2016)]{Leger-blockmodels-package}
Jean-Benoist Leger.
\newblock Blockmodels: A {R}-package for estimating in latent block model and
  stochastic block model, with various probability functions, with or without
  covariates.
\newblock E-print, arxiv:1602.07587, 2016.
\newblock URL \url{https://arxiv.org/abs/1602.07587}.

\bibitem[Levin and
  Levina(2019{\natexlab{a}})]{Levin-Levina-bootstrapping-latent-space-networks}
Keith Levin and Elizaveta Levina.
\newblock Bootstrapping networks with latent space structure.
\newblock \emph{arXiv preprint arXiv:1907.10821}, 2019{\natexlab{a}}.

\bibitem[Levin and
  Levina(2019{\natexlab{b}})]{Levin-bootstrapping-latent-space-networks}
Keith Levin and Elizaveta Levina.
\newblock Bootstrapping networks with latent space structure.
\newblock \emph{E-print, arXiv:1907.10821}, 2019{\natexlab{b}}.
\newblock URL \url{https://arxiv.org/abs/1907.10821}.

\bibitem[Lov{\'a}sz(2009)]{Lovasz-very-large-graphs}
L{\'a}szl{\'o} Lov{\'a}sz.
\newblock Very large graphs.
\newblock In David Jerison, Barry Mazur, Thomasz Mrowka, Wilfried Schmid,
  Richard~P. Stanley, and Shing-Tung Yau, editors, \emph{Current Developments
  in Mathematics, 2008}, pages 67--128, Somerville, Massachusetts, 2009.
  International Press.
\newblock \doi{10.4310/CDM.2008.v2008.n1.a2}.
\newblock URL \url{http://arxiv.org/abs/0902.0132}.

\bibitem[Lunde and Sarkar(2019)]{Lunde-Sarkar-subsampling-sparse-graphons}
Robert Lunde and Purnamrita Sarkar.
\newblock Subsampling sparse graphons under minimal assumptions.
\newblock \emph{arXiv preprint arXiv:1907.12528}, 2019.

\bibitem[Mccullagh(2000)]{Mccullagh-resampling-and-exchangeable-arrays}
Peter Mccullagh.
\newblock Resampling and exchangeable arrays.
\newblock \emph{Bernoulli}, 6:\penalty0 285--301, 2000.
\newblock URL \url{http://projecteuclid.org/euclid.bj/1081788029}.

\bibitem[Milo et~al.(2002)Milo, Shen-Orr, Itzkovitz, Kashtan, Chklovskii, and
  Alon]{Milo-et-al-motifs}
R.~Milo, S.~Shen-Orr, S.~Itzkovitz, N.~Kashtan, D.~Chklovskii, and U.~Alon.
\newblock Network motifs: Simple building blocks of complex networks.
\newblock \emph{Science}, 298:\penalty0 824--827, 2002.
\newblock \doi{10.1126/science.298.5594.824}.

\bibitem[Olding and Wolfe(2014)]{Olding-Wolfe-inference-for-graphs}
Benjamin~P. Olding and Patrick~J. Wolfe.
\newblock Inference for graphs and networks: Extending classical tools to
  modern data.
\newblock In Niall Adams and Nicholas Heard, editors, \emph{Data Analysis for
  Network Cyber-Security}, pages 1--31. World Scientific, Singapore, 2014.
\newblock \doi{10.1142/9781783263752_0001}.
\newblock URL \url{http://arxiv.org/abs/0906.4980}.

\bibitem[Owen and Eckles(2012)]{Owen-Eckles-bootstrapping-data-arrays}
Art~B. Owen and Dean Eckles.
\newblock Bootstrapping data arrays of arbitrary order.
\newblock \emph{Annals of Applied Statistics}, 6:\penalty0 895--927, 2012.
\newblock \doi{10.1214/12-AOAS547}.
\newblock URL \url{http://arxiv.org/abs/1106.2125}.

\bibitem[Rosvall and Bergstrom(2010)]{Rosvall-Bergstrom-mapping-change}
Martin Rosvall and Carl~T. Bergstrom.
\newblock Mapping change in large networks.
\newblock \emph{PLoS ONE}, 5:\penalty0 e8694, 2010.
\newblock \doi{10.1371/journal.pone.0008694}.
\newblock URL \url{http://arxiv.org/abs/0812.1242}.

\bibitem[Shalizi and Rinaldo(2013)]{your-favorite-ergm-sucks}
Cosma~Rohilla Shalizi and Alessandro Rinaldo.
\newblock Consistency under sampling of exponential random graph models.
\newblock \emph{Annals of Statistics}, 41:\penalty0 508--535, 2013.
\newblock \doi{10.1214/12-AOS1044}.
\newblock URL \url{http://arxiv.org/abs/1111.3054}.

\bibitem[Wang(2016)]{Wang-network-comparisons}
Lawrence Wang.
\newblock \emph{Network Comparisons using Sample Splitting}.
\newblock PhD thesis, Carnegie Mellon University, 2016.

\bibitem[Wolfe and Olhede(2013)]{Wolfe-Olhede-nonparametric-graphon-estimation}
Patrick~J. Wolfe and Sofia~C. Olhede.
\newblock Nonparametric graphon estimation.
\newblock E-print, arxiv:1309.5936, 2013.
\newblock URL \url{http://arxiv.org/abs/1309.5936}.

\bibitem[Xu(2018)]{Xu-spectral-methods-graphon-estimation}
Jiaming Xu.
\newblock Rates of convergence of spectral methods for graphon estimation.
\newblock In \emph{International Conference on Machine Learning}, pages
  5433--5442. PMLR, 2018.

\bibitem[Zhang and Xia(2020)]{Zhang-edgeworth-expansion-network-moments}
Yuan Zhang and Dong Xia.
\newblock Edgeworth expansions for network moments.
\newblock \emph{arXiv preprint arXiv:2004.06615}, 2020.

\end{thebibliography}
\end{document}